% This is a general template file for the LaTeX package SVJour3
% for Springer journals. Original by Springer Heidelberg, 2010/09/16
%
% Use it as the basis for your article. Delete % signs as needed.
%
% This template includes a few options for different layouts and
% content for various journals. Please consult a previous issue of
% your journal as needed.
%
\RequirePackage{amsmath}
\RequirePackage{fix-cm}
\documentclass[twocolumn]{svjour3}          % twocolumn
\smartqed  % flush right qed marks, e.g. at end of proof
\usepackage{graphicx}

\usepackage{amssymb}  % Maths
\usepackage{mathrsfs} % Maths
\usepackage{tikz}
\usepackage{algpseudocode} 
\usepackage{algorithm}
\usepackage{url}
%
% insert here the call for the packages your document requires
%\usepackage{mathptmx}      % use Times fonts if available on your TeX system
%\usepackage{latexsym}
% etc.
%
% please place your own definitions here and don't use \def but
% \newcommand{}{}
%
% Insert the name of "your journal" with
% \journalname{myjournal}
%

\newcommand{\acm}[3]{#1\rightarrow#2}
\newcommand{\ac}[3]{$#1\rightarrow#2$}
\newcommand{\omesi}{^\omega_\varsigma}
% Macros relatives à la traduction de PH avec arcs neutralisants vers PH à k-priorités fixes

% Macros générales
%\newcommand{\ie}{\textit{i.e.} }

%\newcommand{\f}[1]{\mathsf{#1}}

% Notations générales pour PH

%\newcommand{\PHs}{\mathcal{S}}

%\newcommand{\PHp}{\mathcal{P}}

%\newcommand{\PHproc}{\mathcal{P}}

%\newcommand{\PHa}{\mathcal{A}}

%\newcommand{\PHsort}{\Sigma}

%\newcommand{\PHfrappeur}{\mathsf{frappeur}}
%\newcommand{\PHcible}{\mathsf{cible}}
%\newcommand{\PHbond}{\mathsf{bond}}
%\newcommand{\PHsorte}{\mathsf{sorte}}
%\newcommand{\PHbloquant}{\mathsf{bloquante}}
%\newcommand{\PHbloque}{\mathsf{bloquee}}

%\newcommand{\PHfrappeR}{\textcolor{red}{\rightarrow}}
%\newcommand{\PHmonte}{\textcolor{red}{\Rsh}}

%\newcommand{\PHfrappe}[3]{\mbox{$#1\PHhitA#2\PHhitB#3$}}
%\newcommand{\PHfrappebond}[2]{\mbox{$#1\PHhitB#2$}}

%\newcommand{\PHneutralise}{\rtimes}

% Actions plurielles

%\newcommand{\PHchange}[2]{#1\langle #2 \rangle}
%\newcommand{\PHchange}[2]{(#1 \Lleftarrow #2)}
%\newcommand{\PHarcn}[2]{\mbox{$#1\PHneutralise#2$}}

\usepackage{ifthen}

\newcommand{\currentScope}{}
\newcommand{\currentSort}{}
\newcommand{\currentSortLabel}{}
\newcommand{\currentAlign}{}
\newcommand{\currentSize}{}

\newcounter{la}
\newcommand{\TSetSortLabel}[2]{
  \expandafter\repcommand\expandafter{\csname TUserSort@#1\endcsname}{#2}
}
\newcommand{\TSort}[4]{
  \renewcommand{\currentScope}{#1}
  \renewcommand{\currentSort}{#2}
  \renewcommand{\currentSize}{#3}
  \renewcommand{\currentAlign}{#4}
  \ifcsname TUserSort@\currentSort\endcsname
    \renewcommand{\currentSortLabel}{\csname TUserSort@\currentSort\endcsname}
  \else
    \renewcommand{\currentSortLabel}{\currentSort}
  \fi
  \begin{scope}[shift={\currentScope}]
  \ifthenelse{\equal{\currentAlign}{l}}{
    \filldraw[process box] (-0.5,-0.5) rectangle (0.5,\currentSize-0.5);
    \node[sort] at (-0.2,\currentSize-0.4) {\currentSortLabel};
   }{\ifthenelse{\equal{\currentAlign}{r}}{
     \filldraw[process box] (-0.5,-0.5) rectangle (0.5,\currentSize-0.5);
     \node[sort] at (0.2,\currentSize-0.4) {\currentSortLabel};
   }{
    \filldraw[process box] (-0.5,-0.5) rectangle (\currentSize-0.5,0.5);
    \ifthenelse{\equal{\currentAlign}{t}}{
      \node[sort,anchor=east] at (-0.3,0.2) {\currentSortLabel};
    }{
      \node[sort] at (-0.6,-0.2) {\currentSortLabel};
    }
   }}
  \setcounter{la}{\currentSize}
  \addtocounter{la}{-1}
  \foreach \i in {0,...,\value{la}} {
    \TProc{\i}
  }
  \end{scope}
}

\newcommand{\TTickProc}[2]{ % pos, label
  \ifthenelse{\equal{\currentAlign}{l}}{
    \draw[tick] (-0.6,#1) -- (-0.4,#1);
    \node[tick label, anchor=east] at (-0.55,#1) {#2};
   }{\ifthenelse{\equal{\currentAlign}{r}}{
    \draw[tick] (0.6,#1) -- (0.4,#1);
    \node[tick label, anchor=west] at (0.55,#1) {#2};
   }{
    \ifthenelse{\equal{\currentAlign}{t}}{
      \draw[tick] (#1,0.6) -- (#1,0.4);
      \node[tick label, anchor=south] at (#1,0.55) {#2};
    }{
      \draw[tick] (#1,-0.6) -- (#1,-0.4);
      \node[tick label, anchor=north] at (#1,-0.55) {#2};
    }
   }}
}
\newcommand{\TSetTick}[3]{
  \expandafter\repcommand\expandafter{\csname TUserTick@#1_#2\endcsname}{#3}
}

\newcommand{\myProc}[3]{
  \ifcsname TUserTick@\currentSort_#1\endcsname
    \TTickProc{#1}{\csname TUserTick@\currentSort_#1\endcsname}
  \else
    \TTickProc{#1}{#1}
  \fi
  \ifthenelse{\equal{\currentAlign}{l}\or\equal{\currentAlign}{r}}{
    \node[#2] (\currentSort_#1) at (0,#1) {#3};
  }{
    \node[#2] (\currentSort_#1) at (#1,0) {#3};
  }
}
\newcommand{\TSetProcStyle}[2]{
  \expandafter\repcommand\expandafter{\csname TUserProcStyle@#1\endcsname}{#2}
}
\newcommand{\TProc}[1]{
  \ifcsname TUserProcStyle@\currentSort_#1\endcsname
    \myProc{#1}{\csname TUserProcStyle@\currentSort_#1\endcsname}{}
  \else
    \myProc{#1}{process}{}
  \fi
}

\newcommand{\repcommand}[2]{
  \providecommand{#1}{#2}
  \renewcommand{#1}{#2}
}
\newcommand{\THit}[5]{
  \path[hit] (#1) edge[#2] (#3#4);
  \expandafter\repcommand\expandafter{\csname TBounce@#3@#5\endcsname}{#4}
}
\newcommand{\TBounce}[4]{
  (#1\csname TBounce@#1@#3\endcsname) edge[#2] (#3#4)
}

\newcommand{\TState}[1]{
  \foreach \proc in {#1} {
        \node[current process] (\proc) at (\proc.center) {};
  };
}
\newcommand{\TCoopHit}[6]{
  \node[#2, apdot] at (#3) {};
  \foreach \proc in {#1} {
    \draw[#2,-] (#3) edge (\proc);
  }
  \path[hit] (#3) edge[#2] (#4#5);
  \expandafter\repcommand\expandafter{\csname TBounce@#4@#6\endcsname}{#5}
}

% ex : \TAction{c_1}{a_1.west}{a_0.north west}{}{right}
% #1 = frappeur
% #2 = cible
% #3 = bond
% #4 = style frappe
% #5 = style bond

% ex : \TActionPlur{f_1, c_0}{a_0.west}{a_1.south west}{}{3.5,2.5}{left}
% #1 = frappeur
% #2 = cible
% #3 = bond
% #4 = style frappe
% #5 = coordonnées point central
% #6 = direction bond

% procedure, abstractions and dependencies
%\text{\textasciicircum}}

%\def\Proc{\mathbf{Proc}}
%\def\Sol{\mathbf{Sol}}

\usepackage{galois}

\tikzstyle{aS}=[every edge/.style={draw,->,>=stealth}]
\tikzstyle{Asol}=[draw,circle,minimum size=5pt,inner sep=0,node distance=1cm]
\tikzstyle{Aproc}=[draw,node distance=1cm]
\tikzstyle{Aobj}=[node distance=1.5cm]
\tikzstyle{Anos}=[font=\Large]
\tikzstyle{Assol}=[node distance=1.2cm]
%\tikzstyle{AprocPrio}=[Aproc,double]
\tikzstyle{AsolPrio}=[Asol,double]
\tikzstyle{AprocPrio}=[Aproc,double]
\tikzstyle{aSPrio}=[aS,double]

\newcommand{\startl}[1]{\node[Aproc] (#1) {$#1$};\node[Asol,right of=#1] (#1s) {};\path (#1) edge (#1s);}%start link
\newcommand{\link}[2]{\node[Aproc,right of=#1s] (#2) {$#2$};\node[Asol,right of=#2] (#2s) {};\path (#1s) edge (#2) (#2) edge (#2s);} %normal link
\newcommand{\specl}[3]{\node[Aproc,#1 right of=#2s] (#3) {$#3$};\node[Asol,right of=#3] (#3s) {};\path (#2s) edge (#3) (#3) edge (#3s);} %special link
\newcommand{\edl}[2]{\node[Assol,right of=#1s] (#1st){$\varnothing$};\path (#1s) edge (#1st);}%end link

%\def\procs{\mathsf{procs}}
%\def\allprocs{\mathsf{allProcs}}
%\def\allprocs{\procs}
%\def\pfp{\mathsf{pfp}}

%\def\sat#1{\lceil #1\rceil}

%\def\cwB{\thisB{\myp}{\ctx}}

%\def\cwB{\sat{\aB_\ctx^\w}}
%\def\cwBz{\thisB{\myp}{\ctx_0}}

% \newcommand{\cwV}{\V_\ctx^\myp}
% \newcommand{\cwE}{\E_\ctx^\myp}
%\newcommand{\VProc}{\textcolor{red}{\V_\PHproc}}
%\newcommand{\VObj}{\textcolor{red}{\V_\Obj}}
%\newcommand{\VSol}{\V_{Sol}}
%\newcommand{\VSol}{\textcolor{red}{\V_{\Sol}}}

%\def\BvSol{\sat{\cwV}^{Sol}}

%\def\mlfp#1{\f{pppf}\{#1\}}

% Styles TikZ et couleurs personnalisées

\usepackage{tikz}

\newdimen\pgfex
\newdimen\pgfem
\usetikzlibrary{arrows,shapes,shadows,scopes}
\usetikzlibrary{positioning}
\usetikzlibrary{matrix}
\usetikzlibrary{decorations.text}
\usetikzlibrary{decorations.pathmorphing}
\usetikzlibrary{arrows,shapes}

\definecolor{lightgray}{rgb}{0.8,0.8,0.8}
\definecolor{lightgrey}{rgb}{0.8,0.8,0.8}

\definecolor{lightred}{rgb}{1,0.8,0.8}
\definecolor{lightgreen}{rgb}{0.7,1,0.7}
\definecolor{darkgreen}{rgb}{0,0.5,0}
\definecolor{darkblue}{rgb}{0,0,0.5}
\definecolor{darkyellow}{rgb}{0.5,0.5,0}
\definecolor{lightyellow}{rgb}{1,1,0.6}
\definecolor{darkcyan}{rgb}{0,0.6,0.6}
\definecolor{lightcyan}{rgb}{0.6,1,1}
\definecolor{darkorange}{rgb}{0.8,0.2,0}
\definecolor{notsodarkred}{rgb}{0.8,0,0}

\definecolor{notsodarkgreen}{rgb}{0,0.7,0}

%\definecolor{coloract}{rgb}{0,1,0}
%\definecolor{colorinh}{rgb}{1,0,0}
\colorlet{coloract}{darkgreen}
\colorlet{colorinh}{red}
\colorlet{coloractgray}{lightgreen}
\colorlet{colorinhgray}{lightred}
\colorlet{colorinf}{darkgray}
\colorlet{coloractgray}{lightgreen}
\colorlet{colorinhgray}{lightred}

\colorlet{colorgray}{lightgray}
\colorlet{colorhl}{blue}

\tikzstyle{boxed ph}=[]
\tikzstyle{sort}=[fill=lightgray, rounded corners, draw=black]
\tikzstyle{process}=[circle,draw,minimum size=15pt,fill=white,font=\footnotesize,inner sep=1pt]
%\tikzstyle{black process}=[process, draw=blue, fill=red,text=black,font=\bfseries]
\tikzstyle{gray process}=[process, draw=black, fill=lightgray]
\tikzstyle{highlighted process}=[current process, fill=gray]
\tikzstyle{process box}=[fill=none,draw=black,rounded corners]
\tikzstyle{current process}=[process, draw=black, fill=lightgray]
%\tikzstyle{current process}=[process,fill=lightcyan]
\tikzstyle{hl process}=[process,fill=blue!30]
\tikzstyle{tick label}=[font=\footnotesize]
\tikzstyle{tick}=[densely dotted] %-
\tikzstyle{hit}=[->,>=angle 45]
\tikzstyle{selfhit}=[min distance=50pt,curve to]
\tikzstyle{bounce}=[densely dotted,>=stealth',->]
\tikzstyle{ulhit}=[draw=lightgray,fill=lightgray]
\tikzstyle{pulhit}=[fill=lightgray]
\tikzstyle{bulhit}=[draw=lightgray]
\tikzstyle{hl}=[very thick,colorhl]
\tikzstyle{hlb}=[very thick]
\tikzstyle{hlhit}=[hl]
%\tikzstyle{hl2}=[hl]
%\tikzstyle{nohl}=[font=\normalfont,thin]

\tikzstyle{update}=[draw,->,dashed,shorten >=.7cm,shorten <=.7cm]

\tikzstyle{unprio}=[draw,thin]%[double]
%\tikzstyle{prio}=[draw,thick,-stealth]%[double]
\tikzstyle{prio}=[draw,-stealth,double]

\tikzstyle{hitless graph}=[every edge/.style={draw=red,-}]

\tikzstyle{aS}=[every edge/.style={draw,->,>=stealth}]
\tikzstyle{Asol}=[draw,circle,minimum size=5pt,inner sep=0,node distance=1cm]
\tikzstyle{Aproc}=[draw,node distance=1.2cm]
\tikzstyle{Aobj}=[node distance=1.5cm]
\tikzstyle{Anos}=[font=\Large]

\tikzstyle{AsolPrio}=[Asol,double]
\tikzstyle{AprocPrio}=[Aproc,double]
\tikzstyle{aSPrio}=[aS,double]

\colorlet{colorhlwarn}{notsodarkred}
\colorlet{colorhlwarnbg}{lightred}
\tikzstyle{Ahl}=[very thick,fill=colorhlwarnbg,draw=colorhlwarn,text=colorhlwarn]
\tikzstyle{Ahledge}=[very thick,double=colorhlwarnbg,draw=colorhlwarn,color=colorhlwarn]

%\definecolor{darkred}{rgb}{0.5,0,0}

\tikzstyle{grn}=[every node/.style={circle,draw=black,outer sep=2pt,minimum
                size=15pt,text=black}, node distance=1.5cm, ->]
\tikzstyle{inh}=[>=|,-|,draw=colorinh,thick, text=black,label]
\tikzstyle{act}=[->,>=triangle 60,draw=coloract,thick,color=coloract]
\tikzstyle{inhgray}=[>=|,-|,draw=colorinhgray,thick, text=black,label]
\tikzstyle{actgray}=[->,>=triangle 60,draw=coloractgray,thick,color=coloractgray]
\tikzstyle{inf}=[->,draw=colorinf,thick,color=colorinf]
%\tikzstyle{elabel}=[fill=none, above=-1pt, sloped,text=black, minimum size=10pt, outer sep=0, font=\scriptsize,draw=none]
\tikzstyle{elabel}=[fill=none,text=black, above=-2pt,%sloped,
minimum size=10pt, outer sep=0, font=\scriptsize, draw=none]
%\tikzstyle{elabel}=[]

\tikzstyle{plot}=[every path/.style={-}]
\tikzstyle{axe}=[black,->,>=stealth']
\tikzstyle{ticks}=[font=\scriptsize,every node/.style={black}]
\tikzstyle{mean}=[thick]
\tikzstyle{interval}=[line width=5pt,red,draw opacity=0.7]
%\definecolor{lightred}{rgb}{1,0.3,0.3}

%\tikzstyle{hl}=[yellow]
%\tikzstyle{hl2}=[orange]

%\tikzstyle{every matrix}=[ampersand replacement=\&]
%\tikzstyle{shorthandoff}=[]
%\tikzstyle{shorthandon}=[]
\tikzstyle{objective}=[process,very thick,fill=yellow!50]

\tikzstyle{coopupdate}=[-stealth,decorate,decoration={zigzag,amplitude=1.5pt,post=lineto,post length=.3cm,pre=lineto,pre length=.3cm}]

\tikzstyle{labelprio}=[circle, fill=blue!30, inner sep=0pt, minimum size=13pt]
\tikzstyle{labelprio1}=[labelprio]
\tikzstyle{labelprio2}=[labelprio, fill=red!60]
\tikzstyle{labelprio3}=[labelprio, fill=orange!50]
\tikzstyle{labelprio4}=[labelprio, fill=brown!50]

\tikzstyle{labelstocha}=[rectangle, rounded corners=4pt]

\tikzstyle{andot}=[circle, fill=black, inner sep=1.2pt, draw=transparent]
\tikzstyle{anligne}=[thick]

\tikzstyle{apdot}=[andot] %[circle, fill=black, draw=black, inner sep=1]
\tikzstyle{apdotsimple}=[] %[circle, fill=black, draw=black, inner sep=1]

% Figure de résumé des liens entre les formalismes
\tikzstyle{equiv-externe}=[thick, rounded corners, draw=gray, fill=gray!10, align=center,
  inner sep=8]

% label pour les délais des actions 
 \tikzstyle{labeldelai1}=[circle, fill=red!60, inner sep=0pt, minimum size=8pt]
  \tikzstyle{labeldelai2}=[circle, fill=blue!30, inner sep=0pt, minimum size=8pt]
  \tikzstyle{labeldelai3}=[circle, fill=brown!50, inner sep=0pt, minimum size=8pt]
  \tikzstyle{labeldelai4}=[circle, fill=green!50, inner sep=0pt, minimum size=8pt]
  
% Automata Networks:
\tikzstyle{local transitions}=[->,>=latex',thick,bend left=30,
               every node/.style={fill=white,inner sep=1pt,outer sep=1pt}]
\tikzstyle{reach}=[fill=lightgray,ellipse]

\tikzstyle{local transitions 2}=[->,>=latex',thick,bend left=100,
               every node/.style={fill=white,inner sep=1pt,outer sep=4pt}]

\tikzstyle{local transitions 3}=[->,>=latex',thick,bend right=100,
               every node/.style={ right, fill=white,inner sep=1pt,outer sep=4pt}]
               
% Graphe d'états
% noeuds
\tikzstyle{vide}= [rectangle, minimum width=2em,minimum height=1.5em,]
\tikzstyle{stable}= [rectangle,fill=lightred]
\tikzstyle{current}= [rectangle,fill=lightcyan]
\tikzstyle{initial}= [rectangle,fill=green!20]

% % STG transitions
\tikzstyle{etiquette}=[midway,fill=blue!20,circle,scale=0.7pt]
\tikzstyle{etiquette2}=[midway,fill=green!20,circle,scale=0.7pt]
\tikzstyle{currentTrans}=[->,very thick,blue]
\tikzstyle{seperatedTransPart1}=[draw, thick, blue]
\tikzstyle{seperatedTransPart2}=[->, thick, blue]

\tikzstyle{mytext}=[thick, text width=4.5em,inner sep=1pt]
\tikzstyle{line} =[draw, thick, -latex',shorten >=2pt]
\tikzstyle{block} =[rectangle,text width=6em,draw,minimum height=4em, outer sep=0pt]

\tikzstyle{adn}=[every node/.style={circle,draw=black,outer sep=2pt,minimum
                size=15pt,text=black}, node distance=1.5cm, ->]
                
% Définition des nouvelles options xmin, xmax, ymin, ymax
% Valeurs par défaut : -3, 3, -3, 3
\tikzset{
    xmin/.store in=\xmin, xmin/.default=-3, xmin=-3,
    xmax/.store in=\xmax, xmax/.default=3, xmax=3,
    ymin/.store in=\ymin, ymin/.default=-3, ymin=-3,
    ymax/.store in=\ymax, ymax/.default=3, ymax=3,
}
% Commande qui trace la grille entre (xmin,ymin) et (xmax,ymax)

% Commande \axes

% Commande qui limite l’affichage à (xmin,ymin) et (xmax,ymax)

% Expression level of a

% Expression level of b

% Expression level of z
 
\begin{document}

\title{A Heuristic for Reachability Problem in Asynchronous Binary Automata Networks
%\thanks{}
}
% Grants or other notes about the article that should go on the front
% page should be placed within the \thanks{} command in the title
% (and the %-sign in front of \thanks{} should be deleted)
%
% General acknowledgments should be placed at the end of the article.

%\subtitle{Do you have a subtitle?\\ If so, write it here}

%\titlerunning{Short form of title}        % if too long for running head

\author{Xinwei Chai\and Morgan Magnin \and Olivier Roux%etc.
}

%\authorrunning{Short form of author list} % if too long for running head

\institute{X. Chai, M. Magnin and O. Roux\at
              Laboratoire des Sciences du Num\'erique de Nantes, UMR CNRS 6004 \'Ecole Centrale de Nantes, 1 rue de la No\"e - B.P. 92101 - 44321 Nantes Cedex 3, France \\
              Tel.: +33-782996065\\
              \email{\{xinwei.chai, morgan.magnin, olivier.roux\}@ls2n.fr}           %  \\
%             \emph{Present address:} of F. Author  %  if needed
%           \and
%            \at
%              Laboratoire des Sciences du Num\'erique de Nantes, UMR CNRS 6004 \'Ecole Centrale de Nantes, 1 rue de la No\"e - B.P. 92101 - 44321 Nantes Cedex 3, France \\
%            \at
}

\date{April 2018}
% The correct dates will be entered by the editor

\maketitle

\begin{abstract}
On demand of efficient reachability analysis due to the inevitable complexity of large-scale biological models, this paper is dedicated to a novel approach: PermReach, for reachability problem of our new framework, Asynchronous Binary Automata Networks (ABAN). ABAN is an expressive modeling framework which contains all the dynamics behaviors performed by Asynchronous Boolean Networks. Compared to Boolean Networks (BN), ABAN has a finer description of state transitions (from a local state to another, instead of symmetric Boolean functions). To analyze the reachability properties on large-scale models (like the ones from systems biology), previous works exhibited an efficient abstraction technique called Local Causality Graph (LCG). However, this technique may be not conclusive. Our contribution here is to extend these results by tackling those complex intractable cases \textit{via} a heuristic technique. To validate our method, tests were conducted in large biological networks, showing that our method is more conclusive than existing ones.
\keywords{Asynchronous Binary Automata Networks \and Local Causality Graph \and heuristic}
% \PACS{PACS code1 \and PACS code2 \and more}
% \subclass{MSC code1 \and MSC code2 \and more}
\end{abstract}

\section{Introduction}
\label{intro}
Works on concurrent systems have been of interest for systems biology for a decade \cite{bockmayr2002using,bortolussi2008modeling,wiley2003computational}. In this context, the challenges nowadays consist of not only model validation with regard to existing knowledge on systems but also behavior prediction of these systems. 
%In this context, reachability issue on formal models is a critical challenge which not only validates models with regard to existing knowledge on systems, but also addresses predictions of the behavior of these systems.
With quantities of available data provided by new technologies, \textit{e.g.} DNA microarray \cite{marx2013}, there is a growing need for high-performance analytic tools, especially for reachability problem, as many static and dynamical properties are transformable to the reachability of certain states. Reachability problem has been studied under many different modeling frameworks for decades \cite{akutsu2007control,barrett2006complexity,Daws1998,esparza1998,mayr1984,wozna2003} and takes an important part in Model Checking \cite{clarke20142}. State Space Explosion problem arises in reachability analysis of concurrent systems as the state space is exponential to the number of components in the model, thus disables naive approaches. More concretely, Peterson has shown that the reachability problem of Petri net is exponential time-hard and exponential space-hard \cite{peterson1977petri}, and this conclusion does not change even in some special situations \cite{esparza1998}. We are prone to believe that the reachability of Boolean Network is also in this class, as there does not exist an efficient solution of polynomial complexity, although there does not exist such formal proof for the non-existence.
%In this paper, we are going to prove that in Boolean Networks which are less expressive than Petri nets, the reachability problem is still \textbf{NP}-complete (\npc{}) in general.

Related studies have been carried over various frameworks: Plateau et al. \cite{plateau1991stochastic} propose a Stochastic Automata Network and study its steady-state behavior, while the reachability analysis is absent; Li et al. \cite{li2012reachability,li2014stability} investigate theoretically the stability, controllability and reachability of Switched Boolean Networks, but their method remains computationally expensive; Ben Abdallah et al. \cite{abdallah2015exhaustive} designed an exhaustive algorithm for reachability using ASP (Answer Set Programming) \cite{baral2003knowledge}. Although ASP has a built-in optimization, the complexity is still exponential. 

To tackle the persisting complexity issue, symbolic model checking \cite{burch1992symbolic} and SAT-solvers \cite{abdulla2000symbolic} have been considered over years, but the solution space of original problem remains huge. Paulev\'e \textit{et al.} \cite{folschette2015,pauleve2011} have proposed new discrete modeling frameworks for concurrent systems: Process Hitting and its updated Automata Network (AN) form. They provide an approach to address this issue by designing a static abstraction (with an over-approximation and an under-approximation of the real dynamics) inspired by abstract interpretation: Local Causality Graph (LCG). This static analysis drastically reduces the state-space and avoids costly global search \cite{pauleve2012}.

In various circumstances, LCG is capable of giving a deterministic result of the reachability of desired states and corresponding realizing transition sequences (if reachable) in polynomial time to the number of automata \cite{pauleve2016goal}, but its applicability is still limited. There are inconclusive cases which disable the reasoning of sufficient conditions, if there exists cycles or conflicts in the LCG. We will identify these cases later (in section \ref{limitation}).
\subsection*{Main Contributions}
%Based on the work done by Paulev\'e \textit{et al.}, 
With the initiative of LCG, this paper is devoted to the study of general reachability problems in Asynchronous Binary Automata Networks (ABAN), then to gain a more profound understanding of the dynamics of biological systems. Many biological networks are encoded in Boolean style \cite{kauffman1969}, because BN is a simple formalism but with strong expressiveness also due to the imprecision of raw data. Different from BN, ABAN is a finite state machine comprised by communicating automata. Each automaton has 2 states corresponding to the bit 0 and 1 in BN. One of the interests of studying ABANs is that BN may be not expressive enough in biological context. For example, to model the dynamic behavior ``$a\gets$ $1$ when $b=1$'', we have $a(t+1)=b(t)$ in BN, $a$ always follows the evolution of $b$, but with an unwanted behavior ``$a\gets 0$ when $b=0$''. ABAN can model this dynamics as \textit{via} \ac{b_1}{a_0}{a_1} without redundancy. More importantly, ABAN reduces the complexity of reachability analysis under some special conditions (\textit{e.g.} LCG with cycles) and helps to embark on the most difficult inconclusive instances without globally traversing the state space, which is the main theme of this paper. Besides, BNs are transformable to Automata Networks, and this property makes our work more extensive (see Appendix \ref{appendix:C}).

%General reachability problem of ANs is \npc{} (see section \ref{NPC}). 
To solve a problem which is costly for naive approaches, there are basically three methodologies: 
\begin{enumerate}
\renewcommand{\labelenumi}{(\theenumi)}
\item abstract the original problem or find a simplified formalism
\item study polynomially solvable cases of the problem
\item apply heuristic
\end{enumerate}
(1) guarantees a correct solution for the problem after abstraction which is not necessarily equivalent to the solution of the original problem; (2) guarantees a correct solution only for a part of instances; (3) keeps the original problem but does not guarantee a correct solution. We are going to walk on the pathway (3) of heuristic. After diving into the mechanics of LCG and the inconclusive cases, we figure out the reason why those cases are intractable by existing static reasoning. With a better understanding of the internal structure of LCGs, we develop a heuristic technique aiming at the application for general instances. This heuristic method has a better performance on conclusiveness than static reasoning, because it attempts to explore a part of the system dynamics \textit{via} partial verification. In the end, we conduct tests on signaling networks of around 100 components (TCR and EGFR, see Section \ref{sect:5}): the results of LCG contain inconclusive instances \cite{folschette2015} while our new method solves them.

This paper is organized as follows: in section 2, we will introduce the formal background, BN and Asynchronous Binary Automata Network (ABAN); section 3 presents the analysis of dynamics using only static reasoning; 
%section 4 states the fact that reachability problem is in general \npc{}, suggesting it is normal that there exist inconclusive cases;
section 4 is the core content of this paper, concerning the solution of inconclusive cases; discussion about tests and conclusion are placed in section 5 and 6.

\section{Preliminaries}\label{sect:2}
\textit{Notations}:
$::$ sequential connector;
$\Rsh$ state change;
$\land \bigwedge$ logic AND;
$\lor\bigvee$ logic OR;
$\#$ cardinal;
$a.$next the successor of $a$. 

%The notion of BN is proposed by S.Kauffman in 1969 \cite{kauffman1969}, and is mainly to be used to demonstrate the \npcs{} of reachability problem in this paper:
%\begin{definition}[BN]
%A Boolean network (BN) is defined as $G_B= (V,\ F)$ where $V$ is the set of vertices $v_i$ containing Boolean variables and $F$ is the set of Boolean transitory functions $f_i$. Typically, $f_i$ is defined as $v_i (t+1)=f_i (v_1 (t),v_2 (t),\cdots,v_n (t))$, and we denote $\mathbf{V}_i$ the actual functioning variables among $v_1 (t),v_2 (t),\cdots,v_n (t)$, then $v_i (t+1)=f_i (\mathbf{V}_i)$.
%\end{definition}
%Although all the transitory functions are defined, the dynamics of the system is not fixed yet. At every instant $t$, different choices of possible states of the next moment $t+1$ make different dynamics, basically the choices are mainly classified by synchronous and asynchronous dynamics \cite{bernot2009}. In this paper, we are going to study the asynchronous one, which allows at most one transition be fired at every instant, while synchronous dynamics updates all the vertices simultaneously for each time point. 
%
%BNs can be used to study static properties (\textit{e.g.} attractors, homeostasis) \cite{hopfensitz2013attractors}, but when facing with dynamic properties, brute force search or ordinary symbolic model checking techniques are not proper for big-scale models for their computational complexity. Thus ABAN and SLCG are proposed to reason dynamical properties statically.

Asynchronous Binary Automata Network (ABAN) is a variant of traditional AN. Binary means that every automaton has exactly two possible states $(0,1)$ and asynchronous implies the update scheme with no more than one automaton can change its value at a time. 

\begin{definition}[ABAN]
An ABAN is a triplet $AB = (\mathbf{\Sigma},\mathbf{L},\mathbf{T})$, where:
\begin{itemize}
\item $\mathbf{\Sigma}\triangleq\{a,b,\ldots,z\}$ is the finite set of automata with every component having a Boolean state;
\item $L_a\triangleq\{a_0,a_1\}$ is the set of binary states of automaton $a\in \mathbf{\Sigma}$, $\mathbf{LS}=\underset{a\in \mathbf{\Sigma}}{\cup} L_a$ is the set of all local states, and $\mathbf{L}\triangleq \underset{a\in \mathbf{\Sigma}}{\times} L_a$ is the set of global states, the state of automaton a at state s is denoted $s[a]=a_i$;
\item $\mathbf{T}\triangleq \{A\rightarrow b_i\mid b\in \mathbf{\Sigma} \land A\in \mathbf{L}\}$ is the set of transitions, where $A$ is the required state(s) for the transition, which allows to flip $b_i$ to the other Boolean state. In other words, transition $tr=\acm{A}{b_j}{b_k}$ is said firable iff $A\subseteq s$.%, denoted by $reach(tr)=1$. 
\end{itemize}
\end{definition}

Furthermore, to describe the evolution of an ABAN, we use the notion of trajectory:
\begin{definition}[Trajectory]
Given initial state $s\in \mathbf{L}$, a trajectory $\delta$ from $s$ to $\Omega$ is a sequence of transitions in $\mathbf{T}$ that can be fired successively. %The set of all trajectories from $s$ is denoted $\mathbf{Tr} (s)$.
\end{definition}

From a given initial state $s$, the state after firing $\delta$ is denoted $s\cdot \delta$ and its local form of certain automaton $a$ is noted $(s\cdot \delta)[a]$.
Fig \ref{fig:1} shows an example of ABAN, with initial state $s=\langle a_0,b_0,c_0,d_0,e_0\rangle$ and a possible trajectory is $\delta=\acm{d_0}{b_0}{b_1}::\acm{b_1}{d_0}{d_1}::\acm{d_1}{c_0}{c_1}::\acm{\{b_1,c_1\}}{a_0}{a_1}$. After firing $\delta$, final state $\Omega=s\cdot \delta=\langle a_1,b_1,c_1,d_1,e_0\rangle$.

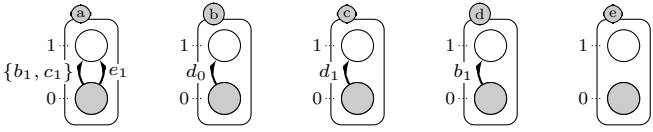
\begin{figure}[ht]
\centering
\begin{tikzpicture}[apdotsimple/.style={apdot},scale=0.7, every node/.style={scale=0.8}]
\scriptsize
\TSort{(0,0)}{a}{2}{l}
\TSort{(2.5,0)}{b}{2}{l}
\TSort{(5,0)}{c}{2}{l}
\TSort{(7.5,0)}{d}{2}{l}
\TSort{(10,0)}{e}{2}{l}

% with delays
\path[local transitions]

	(a_0) edge node[auto] {\{$b_1, c_1$\}} (a_1)
    (a_0) edge[bend right] node[right] {$e_1$} (a_1)
	(b_0) edge node[auto] {$d_0$} (b_1)
	(d_0) edge node[auto] {$b_1$} (d_1)
	%(c_1) edge node[auto] {$\{b_0\}$, $2$} (c_0)
	(c_0) edge node[auto] {$d_1$} (c_1)
;

\TState{a_0, b_0, c_0, d_0,e_0}

\end{tikzpicture}
\caption{Example of ABAN}\label{fig:1}
\end{figure}	
As to the reachability problem, given an ABAN, global reachability can be formalized as: global state $\Omega$ is reachable iff there exists a trajectory $\delta$ such that $s\cdot \delta=\Omega$. Partial reachability is defined analogously: local state $\omega=a_i$ is reachable iff there exists a trajectory $\delta$ such that $(s\cdot \delta)[a]=a_i$. Reachability is denoted $reach (\Omega)$ or $reach(\omega)$ and takes Boolean value $0$ or $1$. In Fig \ref{fig:1}, we can see $\Omega=\langle a_1,b_1,c_1,d_1,e_0\rangle$ or $\omega=a_1$ is reachable from initial state $s$ \textit{via} trajectory $\delta$, that is $reach(a_1)=1$ and $reach(\Omega)=1$. 

To simplify the notations, all the initial states of ABAN are set to 0 by default.
\section{Static analysis of reachability property}\label{sect:3}
To approach various dynamical properties of such networks, Local Causality Graph (LCG) is an efficient static analytic tool for reachability put forward by Paulev\'e \textit{et al.} \cite{pauleve2011}. LCG determines the existence of trajectory of the desired state without global verification.

LCG functions as follows: its over-approximation and under-approximation which give respectively a necessary condition and a sufficient condition of reachability. With these conditions, we can conclude in many cases.%where the failure of over-approximation implies unreachability and vice versa for under-approximation.

\begin{table}[ht]
\centering
\begin{tabular}{c|c|c|c|c}
Over&True&True&False&False\\
\hline
Under&True&False&True&False\\
\hline
Reach&True&Inconclusive&Impossible&False\\
\end{tabular}
\caption{Truth table of LCG}
\label{tab:1}
\end{table}

More importantly, LCG is also able to provide us with a trajectory if $\omega_i$ is reachable suggested by under-approximation. We are not going to detail the original version of this trajectory finding technique, instead, we propose its adaptation for ABAN in order to approach the solution of inconclusive cases.
In this paper, only binary networks are studied, so we propose a simplified form of LCG instead of two LCGs (over and under-approximation) which is well suited for the present need.

The drawback is also clear: there are inconclusive cases, which means LCG is neither able to solve in this situation, nor able to generate a trajectory (if reachable). To improve the conclusiveness (less inconclusive cases) of this method, we generalize the over- and under-approximation into one (SLCG) and add a more detailed reachability analysis.

Besides, LCG is a technique designed for Automata Networks \cite{folschette2015}. To give it a wider applicability, in appendix \ref{appendix:C}, we can see that any BN is transformable to ANs and then SLCG is applicable to its analyses of dynamical properties.
\subsection{Simplified Local Causality Graph (SLCG)}

\begin{definition}[SLCG]\label{defLCG}
Given ABAN $AB = (\mathbf{\Sigma},\mathbf{L},\mathbf{T})$, initial state $\varsigma$ and a desired local state $\omega$, SLCG $A\omesi= (V\omesi,E\omesi)$ is the smallest recursive structure with $V\omesi \subseteq \mathbf{LS}\cup \mathbf{Sol}$ and $E\omesi \subseteq V\omesi\times V\omesi$ which satisfies:
\begin{eqnarray*}
\omega&\subseteq& V\omesi \\
a_i\in V\omesi\cap \mathbf{LS} &\Leftrightarrow& \{ (a_i, sol_{a_i})| a_i\in \varsigma\}\subseteq E\omesi \\
sol_{a_i}\in V\omesi \cap \mathbf{Sol}&\Leftrightarrow& \{ (sol_{a_i},\mathbf{V}_a (sol_{a_i})\}\subseteq E\omesi
\end{eqnarray*}
Notations: $\mathbf{Sol}$ is the set of solutions and $\mathbf{V}$ is the set of required local states of $sol_{a_i}$.
\end{definition}
Intuitively, when the recursive construction is complete, SLCG is in fact a digraph with state nodes $\mathbf{LS}$ and solution nodes $\mathbf{Sol}$. $E$ consists of the links between state nodes and solution nodes. To access certain local states, at least one of its successive solution (corresponding transitions form solution nodes) needs to be fired; similarly, to make one solution node firable, all of its local states need to be satisfied. A recursive reasoning of reachability begins with a state node representing desired local state, go through $a_i\mapsto sol_{a_i}\mapsto b_j \cdots$ and end with initial state (reachable) or a local state without solution successor (unreachable).

\begin{figure}[!ht]
\centering
\begin{tikzpicture}[aS,scale=0.9, every node/.style={scale=0.9}]  
  	
  	\startl{a_1};
    \node[Asol,left of=a_1] (a_1s1){};
    \node[Aproc,left of=a_1s1] (e_1){$e_1$};
    \path 
    (a_1s1) edge (e_1)
    (a_1) edge (a_1s1)
    ; 
  	\specl{above}{a_1}{b_1};
  	\link{b_1}{d_0};
  	\edl{d_0};
  	\specl{below}{a_1}{c_1};
	\link{c_1}{d_1};
    \path (d_1s) edge (b_1);
    \end{tikzpicture}
\caption{SLCG for calculating the reachability of $a_1$ of the ABAN in Fig \ref{fig:1}, with the squares representing local states and small circles representing solution nodes}
\label{fig:2}
\end{figure}
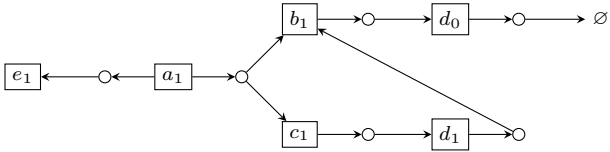
In Fig \ref{fig:2}, the reachability of $a_1$ is computed locally. The left solution node of $a_1$ does not lead to the goal because its successor $e_1$ does not have any successive solution node, \textit{i.e.} $e_1$ is unreachable; the right solution node of $a_1$ requires $b_1$ and $c_1$, and they finally lead to $d_0$ then to $\varnothing$, which is to say, we need nothing to reach $d_0$ as $d_0$ is in initial state (trivial solution).

We can figure out state nodes act as an OR gates while solution nodes act as AND gates. The reachability suggested by SLCG $reach'$ (different from real reachability $reach$) is computed recursively as follows:
\begin{align*}
reach'(a_1)&=reach'(e_1)\lor (reach'(b_1)\land reach'(c_1))\\
&=reach'(b_1)\land reach'(d_1)=reach'(b_1)\\
&=reach'(d_0)=1
\end{align*}

The algorithm of SLCG construction is in Appendix \ref{appendix:B}.

%Compared to original LCG \cite{pauleve2012}, SLCG removes the notion ``Objective'' so that the conciseness gets improved. Because in Boolean semantics, to access certain local state $a_i$, the only way is $a_{\lnot i}\Rsh a_i$, but in multi-level networks, to access for example $a_2$ from $a_0$, there are two possibilities: $a_0\Rsh a_2$ and $a_0\Rsh a_1\Rsh a_2$, which require the notion of ``Objective'' to create forks for different reaching pathways.

\subsection{Limitation of SLCG}\label{limitation}
Although SLCG allows us to reason the reachability locally without traversing the whole state space, it is still providing us with a necessary condition (quasi-equivalent) of reachability because SLCG does not simulate the real evolution of the system.

The inequivalent condition does not suggest it is impossible to reveal the real dynamics of the system. We are going to show that an SLCG gives an equivalent condition of reachability iff it satisfies the following conditions:

\begin{enumerate}
\item No cycles in SLCG
\item No conflicts in SLCG

\end{enumerate}
To be more formal, a cycle (1) is in the form of $a_i\mapsto\cdots\mapsto a_i$, i.e. to access $a_i$, we have to reach first $a_i$. This self-involvement makes the reachability inconclusive. A conflict (2) is that a solution node has multiple successors generating branches, and there are different states of the same automaton \textit{i.e.} $a_i$ and $a_{\lnot i}$. We can not decide the order of reaching these states, because reaching one state may disable the reachability of another one. Sometimes there exists a trajectory which accesses these states in certain order, sometimes there does not exist such.

%As we can see in section \ref{limitation}, there is a limitation for applying SLCG and its related trajectory extracting technique. Similarly, these limitations are the same with the trajectory extraction of ABAN even though the decidability gets improved (there is no inconclusive case). We can see that 
In the following examples, if we ignore those restrictions, SLCG does not imply real reachability, nor it is possible to extract a trajectory.

\begin{enumerate}
\item Example of %Fig \ref{fig:1} and 
Fig \ref{fig:2}, although there is a conflict, $a_1$ is reachable.
\item $\mathbf{\Sigma}=\{a,b,c\}$, $\mathbf{T}=\{\acm{b_0}{a_0}{a_1},\ \acm{a_0}{b_0}{b_1},\ \acm{\{a_1,b_1\}}{c_0}{c_1}\}$,  desired final state $\omega=a_1$. Both $a_1$ and $b_1$ are reachable, but they can not be reached simultaneously. In the SLCG, there are two branches, $a_1\mapsto b_0$ and $b_1\mapsto a_0$, the automata $a$ and $b$ involve themselves in different branches, where a conflict appears.
\begin{figure}[ht]
\centering
\begin{tikzpicture}[aS,scale=0.9, every node/.style={scale=0.9}]  
  	
  	\startl{c_1};
  	\specl{above}{c_1}{a_1};
  	\link{a_1}{b_0};
  	\edl{b_0};
  	\specl{below}{c_1}{b_1};
	\link{b_1}{a_0};
  	\edl{a_0};
    \end{tikzpicture}
\caption{The SLCG of example 2}
\label{fig:3}
\end{figure}
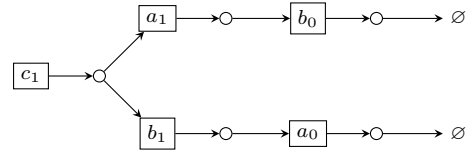
\item $\mathbf{\Sigma}=\{a,b,c\}$, $\mathbf{T}=\{\acm{b_0}{a_0}{a_1},\acm{a_0}{b_0}{b_1},\acm{\varnothing}{a_1}{a_0}, \acm{\varnothing}{b_1}{b_0},\acm{\{a_1,b_1\}}{c_0}{c_1}\}$, $\omega=a_1$. Similarly to example 2, both $a_0$ and $b_0$ are reachable, but they can not be reached simultaneously. In this example, we can see a cycle in the SLCG $a_1\mapsto b_0\mapsto b_1 \mapsto a_0\mapsto a_1$.
\begin{figure}[ht]
\centering
\begin{tikzpicture}[aS,scale=0.9, every node/.style={scale=0.9}]  
  	
  	\startl{c_1};
  	\specl{above}{c_1}{a_1};
  	\link{a_1}{b_0};
  	\edl{b_0};
  	\specl{below}{c_1}{b_1};
	\link{b_1}{a_0};
  	\edl{a_0};
    \path
    	(a_0s) edge (a_1)
        (b_0s) edge (b_1)
    ;
    \end{tikzpicture}
\caption{The SLCG of example 3}
\label{fig:4}
\end{figure}
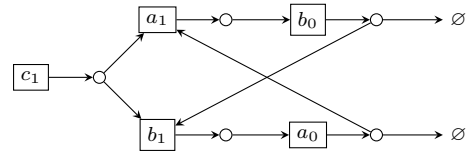

\end{enumerate}

In Example 1, $a_1$ is reachable, while in Example 2 and Example 3, $a_1$ is unreachable. This inconclusiveness is a limitation of SLCG.

As there are cycles in SLCG generated by feedback loops in biological regulatory networks, the existing approach does not allow a solution generally.

Even though we broadened the applicability of SLCG, this method is still not universally applicable for all ABANs or BNs due to the limitations. From the former examples, we realize that it is difficult to solve the reachability problem in general. In the rest of this paper, we are going to discuss how to improve the performance of existing approaches and our new methods. 

\subsection{Trajectory extraction for SLCG}\label{sectExtract}
In this section, we will first prove that if an AN is binary, reachability problem becomes equivalent to $reach'$ suggested by SLCG, with the restrictions in section \ref{limitation} satisfied. With the equivalence, we propose a method to find a trajectory starting from the initial state and ending with desired final state.

\begin{theorem}
In an ABAN, partial reachability and the reachability obtained by SLCG are equivalent, iff there is no cycle nor conflict.
\end{theorem}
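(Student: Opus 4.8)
The plan is to split the biconditional $reach(\omega)\Leftrightarrow reach'(\omega)$ into its two implications and to notice that only one of them needs the hypotheses. First I would establish $reach(\omega)\Rightarrow reach'(\omega)$ for \emph{every} ABAN, independently of cycles and conflicts: given a genuine trajectory $\delta$ with $(\varsigma\cdot\delta)[a]=a_i=\omega$, I read $\delta$ backwards from the transition that last installs $\omega$; each fired transition flipping some $b$ to $b_1$ had its required set $A$ satisfied at firing time, and every local state of $A$ was itself produced by a strictly earlier prefix of $\delta$, bottoming out at $\varsigma$. This yields a finite, acyclic AND--OR certificate inside $A\omesi$, so the recursive evaluation returns $reach'(\omega)=1$. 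This is the ``necessary condition'' direction already announced in Section~\ref{limitation} and it carries no assumption.

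The substantive implication is $reach'(\omega)=1\Rightarrow reach(\omega)=1$ under no cycle and no conflict, which I would prove by explicitly constructing a trajectory. From $reach'(\omega)=1$ the recursive definition exposes a witnessing subgraph of $A\omesi$: at each state node (an OR gate) I keep one solution successor evaluating to $1$, and at each solution node (an AND gate) I keep all its required local states. Absence of cycles makes this subgraph a finite DAG rooted at $\omega$ with leaves among the initial states of $\varsigma$. Here I would exploit the standing convention that all initial states are $0$: any required state $a_0$ already lies in $\varsigma$ and is closed by the trivial empty solution, so I may always choose the trivial solution for it and a ``flip-back'' transition producing $a_0$ never enters the certificate. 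Consequently the certificate uses only transitions that turn a bit from $0$ to $1$. I then induct over the well-founded DAG: for each node $x_1$ with $reach'(x_1)=1$ I build a \emph{monotone} trajectory $\delta_x$ from $\varsigma$ — one that only switches bits on — with $(\varsigma\cdot\delta_x)[x]=x_1$, by recursively obtaining the sub-trajectories for the required states of the chosen solution $sol_{x_1}$, concatenating them, and finally firing the transition of $sol_{x_1}$.

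The heart of the argument, and the precise place where no-conflict is used, is that these sub-trajectories can be concatenated without mutual interference. Let $P$ be the set of automata required in state $1$ somewhere in the certificate and $Q$ the set required in state $0$; no-conflict states exactly that $P\cap Q=\varnothing$. Since every transition of every sub-trajectory only switches an automaton of $P$ from $0$ to $1$, no precondition of the form $q_0$ with $q\in Q$ is ever falsified (such a $q$ is never turned on), and no precondition of the form $p_1$ is destroyed once established (we never flip back). Hence appending a sub-trajectory $\delta_2$ after $\delta_1$ keeps every transition of $\delta_2$ firable and preserves every state already reached; after running all sub-trajectories the whole required set of $sol_{x_1}$ holds simultaneously and its transition fires. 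Applying the induction at the root produces a trajectory reaching $\omega$, giving $reach(\omega)=1$ and hence the equivalence.

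For the necessity of the two hypotheses I would appeal directly to the worked Examples~2 and~3 of Section~\ref{limitation}: in Example~2 the conflict in the branches of $sol_{c_1}$ (requiring both $a_0,a_1$ and both $b_0,b_1$) gives $reach'(a_1)=1$ while $reach(a_1)=0$, and in Example~3 the cycle $a_1\mapsto b_0\mapsto b_1\mapsto a_0\mapsto a_1$ does likewise, so dropping either hypothesis breaks the guaranteed equivalence. I expect the main obstacle to be the concatenation invariant of the third paragraph: one must treat ``no conflict'' as the \emph{global} disjointness $P\cap Q=\varnothing$ over the whole certificate rather than a local property of a single solution node, and it is the binary hypothesis that reduces consistency of the requirement set to this disjointness — in a non-binary network the same DAG could still force incompatible values of one automaton and the monotone concatenation would fail.
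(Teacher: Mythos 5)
Your proposal is correct, and at the top level it matches the paper's decomposition: the direction $reach \Rightarrow reach'$ holds for every ABAN with no hypotheses, and the direction $reach' \Rightarrow reach$ is established by recursion over the SLCG, with no-cycle giving well-foundedness/termination and no-conflict giving non-interference. Where you genuinely depart from the paper is in how non-interference is justified. The paper's proof only observes that, absent conflicts, the direct requirements of a solution node lie on pairwise distinct automata, and then asserts that ``firing any $tr$ will not influence the firability of other transitions''; read literally this is insufficient, because interference can arise deep inside the branches rather than among the top-level requirements --- this is precisely the situation of Fig.~\ref{fig:5}, where $a_1$ and $b_1$ are states of distinct automata yet realizing $b_1$ first destroys the reachability of $a_1$. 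Your argument closes that gap: using the all-zero initial-state convention you note that every transition in the witnessing certificate is a $0\to 1$ flip (monotonicity); you take no-conflict in the global form $P\cap Q=\varnothing$ over the whole certificate, which is the correct reading and which does follow from the paper's per-solution-node formulation, since in a certificate every state node retains a single solution, so two opposite states of one automaton must split below a solution node; and monotonicity plus disjointness then yields the persistence invariant (established $1$-states are never undone, required $0$-states are never turned on) that makes concatenation of sub-trajectories sound. This is the step the paper hand-waves, so your route buys rigor at the cost of length. Two minor repairs: ``keeps every transition of $\delta_2$ firable'' should be weakened to allow skipping transitions whose flip source has already been consumed --- their effect already holds, so skipping is harmless under your own invariant; and your reading of the ``iff'' (the hypotheses are necessary for equivalence to be \emph{guaranteed}, witnessed by Examples 2 and 3 of Section~\ref{limitation}, not that every conflict breaks equivalence, which Fig.~\ref{fig:2} shows is false) is the right one, and it addresses a part of the statement that the paper's own proof leaves untouched.
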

\begin{proof}
Necessity: SLCG gives a necessary condition of reachability, thus all reachable local states in ABANs satisfy their reachability suggested by SLCG.

Sufficiency: in SLCG, the reachability is computed recursively.
\begin{enumerate}
\item $\forall a_i\in \mathbf{LS},\ reach(a_i)=\bigvee reach(sol_{a_i})$, as the reachability of any solution
of $a_i$ makes $a_i$ reachable.\label{1}
\item As there is no conflict in the SLCG, $\forall sol\in \mathbf{Sol}$, for all relating transitions $tr=\acm{b_i}{c_j}{c_k}$ of $sol$, $b$ are different automata from each other. Firing any $tr$ will not influence the firability of other transitions. Therefore $reach(sol)=\bigwedge reach(b_i)$.
\end{enumerate}
Furthermore, on one hand, as there is no cycle in the SLCG, in every iteration $reach(a_i)=\bigvee reach(sol_{a_i})$ and $reach(sol)=\bigwedge reach(b_i)$, $b_i$ are newly appeared local states, otherwise claiming $b_i$ will form a cycle $b_i\mapsto sol_{b_i}\mapsto \cdots b_i$. On the other hand, $\mathbf{LS}$ is finite, the iteration will come to an end where there is no more solution node (unreachable) or initial state (reachable).	
\qed\end{proof} 

Note that for any local initial state $a_i$, $reach(a_i)=reach'(a_i)=1$; similarly, for any local states $b_j$ without successor solution node, $reach(b_i)=reach'(b_i)=0$. As we can see the reasoning for $reach$ and $reach'$ are identical, we have $reach=reach'$, \textit{i.e.} partial reachability and the reasoning of SLCG are equivalent.

Algorithm \ref{algorithm:1} shows exactly how the trajectory is formed in depth-first order (find one valid then stop searching) after computing the reachability of every node in SLCG:

\begin{algorithm}[ht]
\begin{algorithmic}
\State Initialization: trajectory $\delta\gets\varnothing$, $visited=\varnothing$
\State Input: desired state $a_i$
\Function{extract}{$a_i$, $\delta$}
	\State $sol\gets $random($a_i{\rm .next}$)%\Comment{choose randomly from the solutions of $a_i$}
	\State $\delta \gets (\acm{sol{\rm .next}}{a_{\lnot i}}{a_i}::\delta$)
	\If{$a_i\in visited$}
    	\State \Return $\delta$
    \EndIf
	\If{$sol{\rm .next}\neq \varnothing$}
    	\For{$b_j\in sol{\rm .next}$}
        	\State$visited\gets visited\cup b_i$
            \State $\delta\gets \delta::$\Call{extract}{$b_j$, $\delta$}
        \EndFor
    \EndIf
\EndFunction
\end{algorithmic}
\caption{Trajectory-extraction}
\label{algorithm:1}
\end{algorithm}

Let us take the example in Fig \ref{fig:2}, even though there is a conflict ($d_0$ and $d_1$ appear in different branches of solution node of $a_1$). Start from $a_1$, the only reachable solution node requires $b_1\land c_1$, for now $\delta=\acm{\{b_1,c_1\}}{a_0}{a_1}$. Begin with the branch $c_1$, we have $\delta=\acm{d_0}{b_0}{b_1}::\acm{b_1}{d_0}{d_1}::\acm{d_1}{c_0}{c_1}::\acm{b_1,c_1\}}{a_0}{a_1}$, branch $c_1$ ends. Continue with branch $b_1$, we find out $b_1\in visited$ is already reached, branch ends. The trajectory is the same as that in Fig \ref{fig:1}.

\section{Heuristic for inconclusive cases}\label{sect:4}

SLCG loses its generality of solution due to the existence of AND gates. % (solution node with multiple successors). 
In this section, we are trying to make a little compromise in exactitude in order to have a general solution of reachability problem. %Given ABAN, to avoid the complexity brought by big AND gates encountered in section \ref{NPC},
As the approach contains permutations, we limit the in-degree $I$ of its corresponding BN to $O (1)$ and we limit the number of the clauses of each CNF (see appendix \ref{appendix:C}) to 2. As a consequence, the out-degree of corresponding AND gates in SLCG is bounded to $O (1)$ and that of OR gates is bounded to $2$. This hypothesis is reasonable, because in ordinary biological networks, every component interacts only with a small part of the whole network \cite{akutsu2007control}.

To face with an arbitrary ABAN in this context, for now, the remaining tasks are cycles and conflicts in its SLCG.
\subsection{Preprocessing of SLCG}\label{sectprecond}
\subsubsection{Detection of cycles}
To be more precise, the notion of cycle can be expanded to Strongly Connected Components (SCC) of size greater than 1 as an SCC may contain several nested cycles. In \cite{tarjan1972}, the search of SCCs can be done in $O (|V|+|E|)$ time. As SLCG is in fact a sparse graph (the out-degree is limited to $O (1)$), the search of SCCs can be done in $O (|V|)$, \textit{i.e.} linear time.
\subsubsection{Treatment of cycles}
LCG, the former version of SLCG is not capable of concluding due to the existence of cycles \cite{folschette2015}. However SLCG is able to do so for its Boolean properties, and it shows exactly the reachability in the semantics of ABAN. Besides, cycles are no longer the reason of inconclusiveness.

\begin{theorem}\label{cycletheo}
Given an SLCG of ABAN, if it possesses no cycle with \textit{fork} (state node that has plural successive solution nodes, \textit{i.e.} no OR gate in the cycle), then all the local states in its cycles are unreachable and can then be deleted from the SLCG.
\end{theorem}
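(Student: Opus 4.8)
The plan is to exploit the recursive computation of reachability introduced in the proof of the first theorem, made explicit as a least fixed point. On the SLCG every state node is the disjunction ($\bigvee$) of its successor solution nodes, every solution node is the conjunction ($\bigwedge$) of the local states it requires, a trivial solution (the empty-requirement $\varnothing$ leaves) evaluates to $1$, and a state node with no solution successor evaluates to $0$. Since the intended $reach$ is the least solution of this monotone Boolean system, I would set $reach^{(0)}\equiv 0$ and $reach^{(k+1)}=F(reach^{(k)})$, where $F$ applies the OR/AND rules, so that a node is reachable precisely when it turns $1$ at some finite stage. I then assign to each reachable node its rank $r(v)=\min\{k : reach^{(k)}(v)=1\}$.

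Next I would record the two rank inequalities that the OR/AND structure forces. For a state node $a_i$ the disjunction gives $r(a_i)=1+\min_{sol\in a_i.\mathrm{next}} r(sol)$, so $r(a_i)$ is strictly larger than the rank of at least one solution successor; crucially, when $a_i$ carries no fork it has a \emph{unique} solution successor $sol$, which is the one lying on the cycle, whence $r(sol)=r(a_i)-1<r(a_i)$. For a solution node the conjunction gives $r(sol)=1+\max_{b\in sol.\mathrm{next}} r(b)$, so the rank of \emph{every} required state, in particular the next state node $b$ on the cycle, satisfies $r(b)<r(sol)$.

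The core step is then a descent argument. Suppose, for contradiction, that some local state lying on a (necessarily fork-free) cycle $a_i\mapsto sol_1\mapsto b\mapsto sol_2\mapsto\cdots\mapsto a_i$ is reachable, and pick one, $a_i$. Chaining the two inequalities around the cycle, each state node forcing a strict decrease through its unique in-cycle solution and each solution node forcing a strict decrease through its required in-cycle state, returns to $a_i$ with $r(a_i)<r(a_i)$, which is impossible. Hence no state node on a fork-free cycle is reachable; and since every in-cycle solution node is an AND gate fed by an unreachable in-cycle state, the solution nodes on the cycle are unreachable as well. Finally, because all these nodes evaluate to $0$, removing them (propagating $0$ into any external solution node that required one of them, which thereby also collapses to $0$) leaves the reachability verdict of every surviving node, and of the target $\omega$, unchanged; this justifies deleting them from the SLCG.

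The delicate point, and the one I expect to be the real obstacle, is the \emph{uniqueness} of the solution successor that the fork-free hypothesis supplies. Without it the disjunction $r(a_i)=1+\min_{sol} r(sol)$ only guarantees a strict drop toward \emph{some} solution successor, not necessarily the one on the cycle: an OR gate inside the cycle would let $a_i$ become reachable through an alternative, out-of-cycle branch, the descent would escape the cycle, and the contradiction would collapse. I would therefore spell out carefully that fork-freeness is exactly what pins the descent onto the cycle, and I would verify the base cases (an initial state reaches its trivial $\varnothing$-solution at rank $1$, whereas an in-cycle solution always has the next cycle state as a requirement and so is never trivial) to confirm that no cycle node can secretly bottom out at a rank-$1$ leaf.
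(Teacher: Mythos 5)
Your proof is correct, and it reaches the conclusion by a genuinely different mechanism than the paper. The paper's own proof is structural: on a fork-free cycle each state node has a unique solution successor, so the reachability values of all nodes on the cycle are equal; it then appeals to the fact that SLCG reachability must be ``deduced by reaching the initial state'', so a reachable cycle would have to contain a node belonging to the initial state, yet such a node carries only the trivial solution with no successors and therefore cannot lie on a cycle --- contradiction. You instead formalize the grounding step that the paper leaves informal: you read $reach'$ as the \emph{least} fixed point of the monotone OR/AND system, define the rank of a reachable node as the first Kleene stage at which it becomes $1$, and obtain the contradiction $r(a_i)<r(a_i)$ by strict rank descent around the cycle, where fork-freeness is exactly what keeps the descent on the cycle (the equality $r(a_i)=1+r(sol)$ needs the solution successor to be unique, while $r(sol)\geq 1+r(b)$ holds for every requirement $b$ of an AND node). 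What the paper's version buys is brevity; what yours buys is rigor at precisely the two points where the paper is vague: first, the paper never says what $reach'$ means on a cyclic SLCG, whereas your least-fixed-point reading is the natural (arguably the only sensible) formalization of its recursive definition; second, the paper asserts that the cycle nodes ``can then be deleted'' without justification, whereas you note that deletion must propagate $0$ into any external AND gate that required a deleted node, so that surviving verdicts are unchanged. Your closing observation --- that an OR gate inside the cycle would let the descent escape through an out-of-cycle branch --- also explains exactly why cycles with forks need the separate disjunction treatment given in Lemma~\ref{cyclelem}.
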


\begin{proof}
Suppose an arbitrary cycle $C=a_i\mapsto \cdots b_j\mapsto\cdots \mapsto a_i$,with $\mapsto$ an arrow in SLCG. Note that $reach(b_j)=reach(b_j.\text{next})=reach(b_j.\text{next}.\text{next})\cdots$. As $C$ is a cycle, the reachability of all the local states are equivalent. Reaching any element $b_j$ in $C$ implies the reachability of all the elements in $C$. In SLCG, the reachability is deducted by reaching the initial state, \textit{i.e.} if certain elements in $C$ are reachable, there exists at least one element $b_j$ belonging to the initial state. However $b_j$ should have no successor because it is reached already. This fact reveals that the fork containing $b_j$ can never form a cycle, contradiction. So none of the element in $C$ is reachable.
\qed\end{proof}

With this theorem, we can also deal with cycles containing forks. 
\begin{lemma}\label{cyclelem}
The reachability of the elements in a cycle with forks equals to the disjunction of the reachability of the forks.
\end{lemma}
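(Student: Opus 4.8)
The plan is to reduce the lemma to Theorem~\ref{cycletheo} by a clean case split on whether any fork can be reached through one of its non-cyclic branches, backed up by a least-fixpoint computation of the Boolean reachability equations carried by the cycle.

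First I would fix notation for a single cycle $C$ carrying forks. Write the nodes met along $C$ as $y_1 \mapsto s_1 \mapsto y_2 \mapsto \cdots \mapsto y_p \mapsto s_p \mapsto y_1$, with the $y_j \in \mathbf{LS}$ state nodes and the $s_j \in \mathbf{Sol}$ the cycle-internal solution nodes. For each fork $y_j$ let $o_j$ denote the disjunction of $reach(sol)$ over the solution successors of $y_j$ that do \emph{not} belong to $C$ (its out-of-cycle branches), and set $o_j = 0$ when $y_j$ is not a fork. Since an out-of-cycle solution leads entirely outside $C$, each $o_j$ — which is exactly what I read ``the reachability of the fork'' to mean — is determined independently of the reachability values inside $C$. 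Using that a state node is an OR gate and a solution node an AND gate, and that $reach$ is the \emph{least} fixpoint of these monotone Boolean equations (consistent with the SLCG being the smallest recursive structure of Definition~\ref{defLCG}), the cycle contributes the system
\begin{equation*}
reach(y_j) \;=\; o_j \;\lor\; \bigl( reach(y_{j+1}) \land c_j \bigr),\qquad j = 1,\dots,p,
\end{equation*}
indices modulo $p$, where $c_j$ is the conjunction of $reach$ over the prerequisites of $s_j$ lying off $C$.

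Next I would dispose of the two directions. If every $o_m = 0$, then all out-of-cycle branches are dead, so within the reachability computation each fork has a single live solution successor and $C$ behaves exactly like a fork-free cycle; Theorem~\ref{cycletheo} then forces $reach(y_j) = 0$ for all $j$, matching $\bigvee_m o_m = 0$. Conversely, assuming the edges of $C$ form a genuine dependency cycle (so the off-cycle conjuncts satisfy $c_j = 1$ — I would first argue that any $s_j$ with $c_j = 0$ is not part of a reachability cycle and may be pruned), the recurrence collapses to $reach(y_j) = o_j \lor reach(y_{j+1})$, whose least solution, unrolled once around the loop, is $reach(y_j) = \bigvee_{l=1}^{p} o_l = \bigvee_{m} o_m$ for every $j$. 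Hence all elements of $C$ share the same reachability, equal to the disjunction of the fork contributions.

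The step I expect to be the main obstacle is the treatment of the off-cycle conjuncts $c_j$ on the solution (AND) nodes of $C$: I must justify that they do not silently block propagation of reachability once a fork is entered, and that using the least rather than the greatest fixpoint is precisely what prevents a fork-free cycle from spuriously ``supporting itself.'' Establishing that a cycle edge with an unreachable extra prerequisite can be removed without affecting the fork disjunction, and that the remaining equations have the stated least fixpoint, is the only genuinely delicate point; the rest follows mechanically from Theorem~\ref{cycletheo} and the OR/AND reading of the SLCG.
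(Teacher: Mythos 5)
Your proof is correct and rests on the same core mechanism as the paper's: write the OR/AND Boolean equations around the cycle, observe that the out-of-cycle branches of the forks are the only possible sources of reachability, and conclude that every cycle element evaluates to their disjunction. The paper executes this for a single fork --- it writes $reach(b_j)=reach(c_k)\lor\bigl(\bigvee reach(d_l)\bigr)$, imports ``all cycle elements share the same reachability'' from the fork-free argument to substitute $reach(c_k)=reach(b_j)$, and then asserts that ``to keep this equation always valid'' one must have $reach(b_j)=\bigvee reach(d_l)$, extending to several forks by ``similarly''. Your writeup is strictly more careful on exactly the two points where the paper's argument is incomplete. First, the paper's final step is not a valid Boolean inference on its own: $x=x\lor y$ is also satisfied by $x=1,\ y=0$; what excludes that spurious solution is precisely the least-fixpoint reading of the SLCG equations (a cycle cannot support itself), which you invoke explicitly and which the paper leaves implicit, relying on the spirit of Theorem~\ref{cycletheo}. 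Relatedly, the paper \emph{assumes} the cycle elements share one reachability value, whereas you \emph{derive} this by unrolling the recurrence, which is cleaner since that shared-value claim was only proved for fork-free cycles. Second, you account for off-cycle prerequisites $c_j$ of the in-cycle solution nodes, which the paper silently assumes absent or reachable; this is a genuine caveat, because if some $c_j=0$ the cycle is effectively cut at $s_j$, the elements no longer share a common value, and the stated conclusion can fail --- your pruning step is the right way to restrict the lemma to genuine dependency cycles. In short: same decomposition and same key idea, but your version closes two real gaps in the paper's own proof; what the paper's shorter argument buys is only brevity.
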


\begin{proof}
Suppose a cycle $C=a_i\mapsto \cdots b_j\mapsto \circ\mapsto c_k \mapsto\cdots \mapsto a_i$, where $\circ$ represents a solution node. Suppose there is one fork located at $b_j$. $b_j.\text{next}=\{c_k,\{d_l\}\}$, where $\{d_l\}$ are outsiders of the cycle. According to the reasoning of SLCG, $reach (b_j)=reach (c_k)\lor (\bigvee reach (d_{l}))$. As in the proofs above, all the local states in the cycle share the same reachability: $reach (b_j)=reach (c_k)\Rightarrow reach (b_j)=reach (b_j)\lor (\bigvee reach (d_{l}))$. To keep this equation always valid, there must be $reach (b_j)=\bigvee reach (d_{l})$. Similarly, we can obtain the reachability in a cycle with plural forks: $reach (C)=\bigvee reach (forks)$.
\qed\end{proof}

With Theorem \ref{cycletheo} and Lemma \ref{cyclelem}, before stepping into the next part of dealing with AND gates, we can perform a recursive preprocessing by deleting the cycles in SLCG to ensure no cycle remaining.
\subsection{AND Gates in SLCG}\label{sectAndGates}
After preprocessing, we can get rid of cycles. The guideline is then to analyze an SLCG with only AND gates. To achieve this goal, we need to find a trajectory reaching all the components of given AND gates simultaneously. These components form a sub-state, and if the sub-state is reachable, the corresponding transition of AND gate can be fired. 
\begin{definition}[sub-state]
The set of sub-states $S$ is the Cartesian product of the local states of several automata: $S\triangleq \underset{a\in \mathbf{\Sigma}'}{\times} L_a$, where $\mathbf{\Sigma}'\subseteq \mathbf{\Sigma}$.
\end{definition}

Example: in Fig \ref{fig:1}, $sub=\{ b_1,c_1\}$ is a sub-state, when $sub$ is reached, transition \ac{\{ b_1,c_1\}}{a_0}{a_1} is firable.

As the cycles do not persist, the order reaching the members in a sub-state is the only factor that affects the final reachability. The reachability of a sub-state can be then formulated as sequential reachability:
\begin{definition}[Sequential reachability]
Let $sub=\{ls_1,\ldots,ls_n\}$ and sequence $seq=ls_1::\ldots::ls_n$, the sequential reachability of $sub$ is denoted $reach (seq)=reach (ls_1)::\ldots::reach (ls_n)$, i.e. from initial state, the $sub$ is reachable in the order $seq$ by following the trajectories given by SLCG.
\end{definition}

Example: Fig \ref{fig:5} shows the SLCG for reachability of $c_1$ in ABAN with transitions $\mathbf{T}=\{\acm{\{a_1,b_1\}}{c_0}{c_1},\acm{b_0}{a_0}{a_1},\acm{c_0}{b_0}{b_1}\}$.
\begin{figure}[ht]
\centering
\begin{tikzpicture}[aS]  
  	
  	\startl{c_1};
  	\specl{above}{c_1}{a_1};
  	\link{a_1}{b_0};
  	\edl{b_0};
  	\specl{below}{c_1}{b_1};
	\link{b_1}{c_0};
  	\edl{c_0};
    \end{tikzpicture}
\caption{Reachability depends on firing order}
\label{fig:5}
\end{figure}
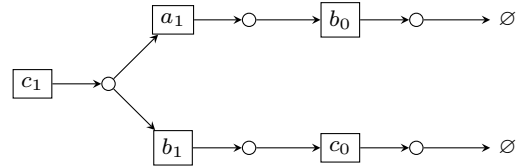

$a_1$ and $b_1$ are reachable respectively but is not necessarily for $c_1$. If we begin with the fork $a_1$, $sub=\{a_1,b_1\}$ is reachable with trajectory $\acm{b_0}{a_0}{a_1}::\acm{c_0}{b_0}{b_1}::\acm{\{a_1,b_1\}}{c_0}{c_1}$. However if we begin with the fork $b_1$, after firing $\acm{c_0}{b_0}{b_1}$, $b_0$ is no longer reachable, resulting the unreachability of $a_1$. We have $reach (a_1::b_1::c_1)=1$ and $reach (b_1::a_1::c_1)=0$.

As the firing order matters, we come to verify all the possible sequential reachabilities of certain sub-state to obtain its reachability.

\begin{theorem}\label{theoperm}
Given sub-state $sub=\{ls_1,\ldots,ls_n\}$, with all the local states in $sub$ are reachable: $reach (ls_i)=1,\ \forall i\in[1,n]$, the set of permutations of $sub$ is denoted $Perm (sub)=\{(ls_1::ls_2,::\ldots ::ls_n),\ \cdots,\ (ls_n::ls_{n-1}::\ldots,::ls_1)\}$. $\bigvee reach (Perm (sub))=1$ is a quasi-equivalent condition of $reach (sub)=1$.
\end{theorem}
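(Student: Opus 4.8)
The plan is to split the claimed quasi-equivalence into its two implications and treat them asymmetrically, since only one of them is an exact statement. First I would prove the sufficiency direction $\bigvee reach(Perm(sub))=1 \Rightarrow reach(sub)=1$ outright. If some permutation $\sigma = ls_{i_1}::\cdots::ls_{i_n} \in Perm(sub)$ satisfies $reach(\sigma)=1$, then by the definition of sequential reachability there exist SLCG-derived sub-trajectories $\delta_1,\ldots,\delta_n$ such that $ls_{i_1}$ holds after $\delta_1$, each $\delta_k$ establishes $ls_{i_k}$ without destroying $ls_{i_1},\ldots,ls_{i_{k-1}}$, so the concatenation $\delta=\delta_1::\cdots::\delta_n$ ends in a state in which all of $ls_1,\ldots,ls_n$ hold simultaneously. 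This $\delta$ witnesses $reach(sub)=1$, so this half is exact and needs no heuristic hedge.

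The harder, only quasi-valid direction is necessity: $reach(sub)=1 \Rightarrow \bigvee reach(Perm(sub))=1$. Here I would start from any trajectory $\delta$ with every member of $sub$ present in the final state $s\cdot\delta$ and extract a candidate ordering. For each $ls_i$, let $t_i$ be the index of the last transition of $\delta$ that flips the automaton of $ls_i$ into its target value (set $t_i=0$ if $ls_i$ already holds in $\varsigma$); since the automata are binary and asynchronous, the $t_i$ are distinct, and because $ls_i$ holds at the end with no later flip, its value is never disturbed after $t_i$. Sorting the $ls_i$ by increasing $t_i$ yields a permutation $\sigma\in Perm(sub)$, and the intended argument is that the prefixes of $\delta$ up to the successive $t_i$ establish the local states in exactly the order $\sigma$ while preserving those already set, giving $reach(\sigma)=1$.

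The step I expect to be the main obstacle, and the reason the statement is only \emph{quasi}-equivalent, is bridging between this extracted ordering and the sequential reachability that $reach(\sigma)$ actually tests: the latter reaches each $ls_{i_k}$ through the canonical SLCG trajectory for $ls_{i_k}$ and insists on fully establishing one local state before beginning the next, whereas a genuine witness $\delta$ may interleave the causal chains of several local states — advancing the precondition chain of $ls_{i_2}$ while the chain of $ls_{i_1}$ is only partly built — so that no strictly one-at-a-time, SLCG-guided schedule reproduces it. Ruling this out in general amounts to showing that the preconditions discovered by the SLCG can always be serialized, which fails exactly when shared automata create conflicts the permutation enumeration cannot resolve; hence I would present necessity as holding in the bounded-degree, cycle-free setting of this section and flag the residual interleaving cases as the origin of the ``quasi''.

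To keep the argument honest I would close by noting that, since $Perm(sub)$ is checked exhaustively and each entry is decided by the already-justified $reach$ recursion (the earlier theorem giving $reach=reach'$ on the acyclic, conflict-free fragments, together with Theorem~\ref{cycletheo} and Lemma~\ref{cyclelem} for the cycle preprocessing), the sufficiency half alone already yields a sound decision procedure, and it is the necessity half that upgrades soundness toward completeness.
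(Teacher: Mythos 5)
Your proposal is correct and follows essentially the same route as the paper's own proof: the exact direction (some permutation reachable $\Rightarrow$ the sub-state reachable) is handled by concatenating the per-state SLCG trajectories, while the converse direction is only quasi-valid, failing precisely in the cross-fork conflict cases where a genuine witness interleaves causal chains that no one-at-a-time permutation schedule can reproduce --- exactly the counterexample the paper gives in Fig.~\ref{FigConflictInForks}. The only differences are cosmetic: your labels (sufficiency/necessity) are swapped relative to the paper's (necessity/quasi-sufficiency), and your last-flip-time extraction sketch makes the interleaving gap explicit where the paper instead phrases it via precedence constraints ($\triangleright$) and solvable versus unsolvable conflicts.
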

\begin{proof}
Notation: $a_i\triangleright b_j$ means that $a_i$ must be present before $b_j$.

Necessity: if there exists a permutation $perm_i\in Perm$ s.t. $reach (perm_i)=1$, then $sub$ can be reached according to $perm_i$.

Quasi-sufficiency: in Definition \ref{defLCG}, SLCG is the smallest structure which leads to desired local state. To reach $sub$, every local state in SLCG is mandatory to be reached. Then the question of sufficiency becomes:

Given $\bigvee reach (Perm (sub))=0$, does there exist a permutation $perm (Ls)$ s.t. $reach (sub)=1$?

Although all the local states in $sub$ are reachable, the existence of conflicts leads to potential unreachability of $sub$. Some conflicts are solvable, see Fig \ref{fig:5}. There are also unsolvable ones. An unsolvable conflict can be formalized as: $ls_1\triangleright ls_2 \triangleright \cdots \triangleright ls_1$, where $ls_1$ is before $ls_2$ and $ls_2$ is before $ls_1$, which is impossible in asynchronous semantics. Example: in Fig \ref{fig:3}, to reach $\{a_1,b_1\}$, $a_1\triangleright b_1\triangleright a_1$, so the sub-state is unreachable. We can see the conflict is unsolvable no matter how we change the order of firing. For solvable conflicts, $perm(sub)$ probably covers one of the admissible order. One possible counterexample is shown in Fig. \ref{FigConflictInForks}. %Sufficiency is proved. Hence $reach (sub)=\bigvee reach (Perm (sub))$.
\qed\end{proof}

It is remarkable that the former approach is efficient in deciding reachability and finding reaching trajectory, but it has a drawback: if there exists a solvable conflict in different forks, traversing permutations may be not able to find the trajectory towards goal state. In Fig \ref{FigConflictInForks}, if $sol_{c_1}$ is resolved first, automaton $d$ will be on state $d_1$, which disables the reachability of $b_1$. In other cases, the trajectory of $a_1$ is findable.%It is normal because reachability problem in ABAN is \npc{} which prevent us from effective and exact solution.
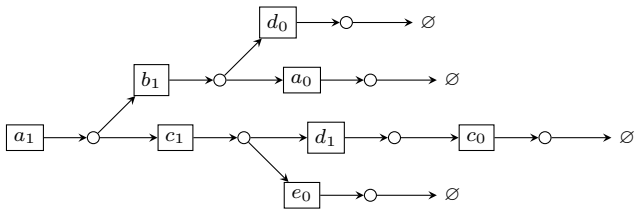
\begin{figure}[ht]
\centering
\begin{tikzpicture}[aS,scale=0.9, every node/.style={scale=0.9}]  
  	
	\startl{a_1};
	\specl{above}{a_1}{b_1};
	\specl{above}{b_1}{d_0};
	\edl{d_0};
	\link{b_1}{a_0};
 	\edl{a_0};
	\link{a_1}{c_1};
 	\link{c_1}{d_1};
 	\link{d_1}{c_0};
 	\edl{c_0};
 	\specl{below}{c_1}{e_0};
 	\edl{e_0};
    \end{tikzpicture}
\caption{Conflicts in different forks}\label{FigConflictInForks}
\end{figure}

%With Theorem \ref{theoperm}, we can ensure that no solution is missed through the verification of all the permutations of sub-states. 
For one sub-state, the number of permutations is $\mathbf{A}_I^I=I!$, as $I=O (1)$, this number is adjustable by controlling $I$ ($7!=5040,10!=3.6\times 10^6$).

However there is more than one AND gate in general biological networks and those AND gates could be chained \textit{e.g.} \cite{samaga2009logic}, \textit{i.e.} the successors of certain AND gate contain other AND gates. We analyze first the simple AND gates $simp$ (without successive AND gates) by traversing its permutations. If all elements in $simp$ are reachable, update initial state by firing all the transitions in extracted trajectory \textit{via} (section \ref{sectExtract}), and delete the successors of $simp$, then restart the process from finding simple AND gates. During the whole process, if an AND gate is not reachable after traversing its permutations, the final goal state is not reachable as the SLCG is linked by logical AND. Otherwise, when the process terminates, there is no AND gate, SLCG is conclusive as there is no cycle or conflict.

The statement above is the worst case: in reality, all AND gates are not necessarily composed of exact $I$ components, and permutations are determined to be unreachable before verification as its subsets may have been confirmed unreachable in other tentatives.

For example: given an AND gate $sol_a=b\land c\land d$, where $b,c,d$ are local states. Normally 6 realizing orders need checking: $b::c::d$, $b::d::c$, $c::b::d$, $c::d::b$, $d::b::c$ and $d::c::b$. If we find the order $b::c$ is not realizable when verifying the first realizing order, then we do not have to verify the reachability of $b::c::d$ and $b::d::c$ where $b$ occurs before $c$. $d::b::c$ is not included, because firing $d$ changes its state before firing $b::c$.

\subsection{Heuristic on OR Gates}\label{sectheuristic}
In the previous section, without OR gates (one state node has multiple solution successors), reachability problem is solvable in polynomial time even with AND gates if $I=O (1)$ as the reasoning of SLCG is linear with the number of local states.

In general SLCGs, there exist both AND gates and OR gates. Every disjunction may generate a fork in global reasoning, except the cases where global state satisfies all local states in a fork simultaneously and reduce the possibilities of final states and simplifies the computation. If an OR gate is followed by an AND gate, there are multiple initial states for the reasoning of the AND gate, \textit{i.e.} there are multiple sets of permutations to verify, the size of problem multiplies. 

To deal with the concern brought by OR gates, if they are numerous %(say more than 30) 
to enumerate, general SLCG can be regarded as a decision tree, where we need to make a choice at each OR gate in order to make every involved AND gate take the value 1. 

To simplify the computation, we suppose that all of the AND gates are already transformed to equivalent solution nodes. We are going to prove: under this hypothesis, the general reachability problem is at most as complex as random walks problem \cite{pearson1905problem} (see appendix \ref{appendix:A}) of the same size. 

We associate every OR gate with its predecessor and successor, more precisely, every OR gate is responsible for the reachability of its predecessor and also for the choice of its successor.

In the worst case, there is only one configuration over all disjunctions that makes objective reachable. To obtain the exact reachability by brute force search, thanks to the limit of the in-degree of OR gates, $2^D$ trials are needed with $D$ being the number of disjunctions. $2^D$ is still in exponential. To deal with such case, we transform the problem into the one with much less complexity but with a high probability of reaching the correct answer. 
\begin{theorem}\label{TheoEquiv}
The reachability problem of an SLCG with solely OR gates is equivalent to the Random Walks problem of size $\#OR$.
\end{theorem}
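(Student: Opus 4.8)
\section*{Proof proposal}

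The plan is to build an explicit, choice-by-choice correspondence between the configurations of an OR-only SLCG and the trajectories of a random walk of length $\#OR$, and then to show that the two decision problems --- ``is the objective reachable?'' and ``does the walk hit its target?'' --- have identical solution sets under this correspondence. First I would make the hypothesis precise. Once the AND gates have been transformed into equivalent solution nodes, every solution node carries a single (or trivial) requirement, so the only branching remaining in the SLCG occurs at the OR gates, i.e. at state nodes with several solution successors; by the degree bound of Section~\ref{sectAndGates} each such node offers at most two alternatives. Hence the entire reasoning collapses to a family of independent binary decisions, one per OR gate, and a global configuration is an assignment $\sigma \in \{0,1\}^{\#OR}$.

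Next I would observe that each configuration $\sigma$ selects, at every OR gate, exactly one solution-successor edge, thereby pruning the SLCG into a deterministic \emph{selection subgraph}. On this subgraph there is no further choice to make: since the cycles were already eliminated by the preprocessing of Section~\ref{sectprecond} and $\mathbf{LS}$ is finite, reachability is decided simply by following the selected edges from the objective node down to leaves. Consequently $\sigma$ makes the objective reachable if and only if every active selected branch terminates in an initial local state (a trivial solution) rather than in a dead end.

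I would then define the random walk itself: start at the objective node and, at each OR gate encountered, step to one of its at most two solution successors with equal probability, continuing until an initial state or a dead end is reached. Because each of the $\#OR$ OR gates contributes exactly one binary step and acyclicity guarantees termination, every walk trajectory realises precisely one configuration $\sigma$, and conversely every $\sigma$ is realised by a trajectory of positive probability. This is exactly a random walk of size $\#OR$ in the sense of \cite{pearson1905problem}. The equivalence then closes immediately: the objective is reachable in the SLCG iff some configuration $\sigma$ succeeds, iff some walk trajectory hits a target leaf with positive probability; since the bijection matches successes with hitting trajectories and the number of decision points equals the number of walk steps, the SLCG reachability problem and the Random Walks problem of size $\#OR$ coincide.

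The hard part will be handling the branching (DAG rather than linear) structure honestly: an OR gate may sit on several distinct reasoning branches, yet it must contribute a \emph{single, consistent} decision, not one per occurrence. I would address this by indexing both the configuration $\sigma$ and the walk steps by the distinct OR-gate \emph{nodes} rather than by their path occurrences, so that a shared node is decided once and only once. Acyclicity, inherited from the preprocessing step, then guarantees that no node is revisited and that the walk length stays bounded by $\#OR$, which is what keeps the reduction size-preserving and makes the stated equivalence exact.
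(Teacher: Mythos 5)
Your reduction walks on the wrong object. The Random Walks problem referred to by the theorem (Appendix \ref{appendix:A}) is a one-dimensional $\pm 1$ walk on $\{0,\dots,n\}$ with the hitting time of $n$ as the quantity of interest, and the paper realizes it on the \emph{space of configurations} of the OR gates, not on the SLCG itself: one initializes the assignment $\{or_1,\dots,or_n\}$ at random, each ``step'' of the walk is a \emph{modification} (flip) of a single OR gate, and the quantity that performs the $\pm 1$ walk is the distance (number of wrongly-set gates) to a correct configuration --- in the worst case a unique one among $2^{D}$. This identification is what carries the actual content of the theorem: after $2D^{2}$ flips the correct configuration is reached with probability at least one half, repeating $\log_2 n$ rounds pushes the failure probability below $\frac{1}{D}$, and the error is one-sided (no false positives). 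Your walk instead descends the SLCG from the objective, flipping an independent fair coin at each OR gate encountered, and the ``equivalence'' you extract --- reachable iff some trajectory succeeds with positive probability --- is merely the existential statement that some configuration works. It has no hitting-time content, and under your purely structural notion of success it collapses the problem to ordinary graph search on a DAG (with solely OR gates every solution node has a single successor, so a successful descent is just a path to an initial state), which would make any appeal to random walks vacuous. The reason the paper needs the configuration-space walk is precisely that choices at different OR gates interact through conflicts and the already-folded-in AND structure, so that only certain \emph{global} assignments succeed; independent per-gate coin flips do not model this.

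There is also a concrete defect inside your construction: the claimed bijection between walk trajectories and configurations $\sigma\in\{0,1\}^{\#OR}$ fails. A descending walk visits, and therefore decides, only the OR gates lying on the branch it actually follows; every configuration agreeing with it on those gates is collapsed onto the same trajectory, so trajectories correspond to whole cylinders of configurations rather than to single ones, and the reduction is not size-preserving in the sense the statement requires (the walk length is bounded by the depth of the pruned graph, not tied to $\#OR$). Your final paragraph, indexing decisions by OR-gate nodes rather than by path occurrences, repairs the consistency of repeated visits but not this many-to-one collapse. To recover the paper's theorem you would need to abandon the descent picture and argue about local search over $\{0,1\}^{D}$: define the position of the walk as the Hamming distance to a satisfying configuration, show each modification changes that distance by exactly one, and then import the expected-hitting-time bound of Appendix \ref{appendix:A} to obtain the $2D^{2}$-trial and $1-\frac{1}{D}$ guarantees.
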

\begin{proof}
Numerate OR gates as $OR=\{or_1,\ldots, or_n\}$, where $or_i=\{0,1\}$. Initializing $OR$ with random configuration, every modification on an OR gate is equivalent to a ``step'' in random walks, in the worst case, after $2D^2$ trials, the probability of reaching the goal is greater than one-half, if we execute $\log_2 n$ (number of loops is adjustable) sets of trials, the probability of reaching the ``correct configuration'' is greater than $1-\frac{1}{D}$ (details in Appendix \ref{appendix:A}). If we still do not find the desired trajectory, we consider the goal is unreachable. In this case, the possibility of false negative is at most $\frac{1}{D}$ and that of false positive is $0$.
\qed\end{proof}

The proof above shows the worst case of our heuristic method. In fact, if the desired state is reachable, it is probable that exact solution is found during the trial because:
\begin{itemize}
\item Starting choice is probably not the farthest from correct choice
\item There are more than one choice that makes the desired state reachable
\end{itemize}

\subsection{Overall Process}\label{sectOverall}
Combining all the parts in section \ref{sect:4} and trajectory extraction technique in section \ref{algorithm:1}, the whole process of PermReach is shown as follows:
\begin{enumerate}
\item Precondition initial SLCG, cycles are deleted (Section \ref{sectprecond})
\item Build decision trees for AND gates and OR gates
\item Launch the heuristic on OR gates, obtain an SLCG with pure AND gates (Section \ref{sectheuristic})
\item Compute the reachability on AND gates, if reachable, quit; if not, return to step 3 (Section \ref{sectAndGates})
\end{enumerate}

\section{Implementation and Benchmarks}\label{sect:5}
The overall process in section \ref{sectOverall} is implemented on Matlab\footnote{Implementation and testing data sets are available at \url{https://github.com/XinweiChai/LCG_Reasoning}}. To evaluate the performance in large \textit{in silico} networks, we take T-cell Receptor model (TCR) \cite{saez2007logical} and epidermal growth factor receptor model (EGFR) \cite{samaga2009logic} as examples, with the former one containing 95 components and 206 transitions and the latter one containing 104 components and 389 transitions respectively. 

These models are originally Boolean networks. According to Appendix \ref{appendix:C}, they are transformable to ABANs. We then take several automata as input, varying exhaustively their initial states combinations ($2^{init\_state}$), take the reachability of the states of another automata set as output. We first test the performance of traditional model checkers, Mole\footnote{\url{http://www.lsv.fr/~schwoon/tools/mole}} and NuSMV\footnote{\url{http://nusmv.fbk.eu}}, in which Mole turns out to be timeout for 6 in 12 outputs, and all timeout for NuSMV in model EGFR. Due to the big state space, traditional model checkers are not effective. 

To validate our approach, we first use a small model: phage-$\lambda$ model \cite{thieffry1995dynamical} to compare with an alternative reachability analyzer Pint \cite{pauleve2012}. In this model with 4 components and 12 transitions (without taking consideration of the self-regulations), our result shows complete decidability while Pint is not able to figure out the reachability of $[cll=1]$. In big examples TCR and EGFR,
%Folschette \textit{et al.} using Pint \cite{folschette2015},
although PermReach takes more time than Pint, it outputs the sequence from initial state towards final state. More importantly, it gives decidable reachability for any input. In the TCR tests, PermReach gives exactly the same result as Pint did. As for EGFR tests, PermReach takes the inconclusive cases of Pint as input, and returns decidable outputs.

%in his paper, there are about Almost all the examples in this paper are ``inconclusive'' in Pint, but they are solvable \textit{via} the presented heuristic technique. In \cite{folschette2015}, Folschette \textit{et al.} ran tests over the possible variations of 13 input nodes (different initial states), and observed the reachability of other 12 components. In total $12\times 2^{13}=98,304$ instances are tested. Among these tests, there are 9,986 (10,16\%) inconclusive instances. We conducted tests on these instances and found out they are conclusive in our methods.

\begin{table}[ht]
\centering
\tiny
\begin{tabular}{|c|c|c|c|c|}
  \hline
			&	  \multicolumn{2}{c|}{TCR} & \multicolumn{2}{c|}{EGFR}  \\
\hline
Inputs	&	  \multicolumn{2}{c|}{3} & \multicolumn{2}{c|}{13}\\
\hline
  Outputs &	  \multicolumn{2}{c|}{5} & \multicolumn{2}{c|}{12} \\
\hline
  Total tests & \multicolumn{2}{c|}{$2^3\times 5=40$} & \multicolumn{2}{c|}{$2^{13}\times 12=98,304$}\\
\hline
Analyzer      &  Pint       &PermReach   &  Pint       &PermReach             \\
\hline
  True       &  \multicolumn{2}{c|}{16(40\%)}  & 64,282(65.4\%)&74,268(75.5\%)\\
\hline
  Inconclusive & \multicolumn{2}{c|}{0(0\%)}    &9,986(10.1\%)&0(0\%)  \\
\hline
  False      &  \multicolumn{2}{c|}{24(60\%)} &24,036(24.5\%)&24,036(24.5\%)\\
\hline
  Total time &  7s       &  20s        & 9h50min              & 13h20min       \\
\hline

\end{tabular}
\caption{Results of the tests on large-scale examples using Intel Core i7-3770 CPU, \@3.4GHz, 8.00G RAM. Column “Pint” gives the related results on ANs, while column “PermReach” gives the results for ABANs. “True”, “Inconclusive” and “False” give respectively the number of different results of reachability, while “Max time” and “Total time” depict respectively the maximum time of the individual computations.}
\label{tab:2}
\end{table}

As seen in the previous results, our heuristic technique is more conclusive than the reasoning of Pint. In the configuration of the heuristic approach, if there are less than 20 OR gates after preprocessing in Section \ref{sectprecond}, the computation will be shifted from heuristic to global search as the size of enumeration is acceptable. There are only 11 OR gates in EGFR model, therefore the results are firmly conclusive. Even though we do not shift to the global search, the conclusiveness is high enough according to Theorem \ref{TheoEquiv}.

To sum up, PermReach has a better time performance than traditional exhaustive model checkers (Mole and NuSMV); on the other hand, it is more conclusive than abstract analyzers (Pint) while keeping a reasonable time performance.

\section{Conclusion and future work}\label{sect:6}
This paper proposes an expressive formalism ABAN to study the reachability problem. The original approach SLCG has limited conclusiveness because static and local reasoning does not simulate all real system dynamics. Due to the complexity of global search, developing a heuristic technique based on sub-states becomes a feasible choice. The heuristic method reproduces the system dynamics by traversing possible orders of transitions. This ``dynamic tentative'' makes it closer to real dynamics than LCG is.

Future work: in the reasoning of AND gates, the computation on permutations is expensive but is still not conclusive enough, see Fig \ref{FigConflictInForks}. To speed up the whole procedure and improve the conclusiveness, we plan to apply SAT (Satisfiability) solvers or Answer Set Programming (ASP) to refine the analysis of transition orders ($\triangleright$) in the same fork and those across forks. In addition, we may contemplate the extension of our heuristic technique to multivalued models.
\appendix

\section{Random Walks problem}\label{appendix:A}
\begin{definition}[Random Walks]
Start with an arbitrary natural number $i$, at each time step, the number add or minus 1 with equal probability, how many expected steps $Z_i$ does it take to reach certain goal $n$?
\end{definition}
Starting from $n$, there is no need to move, thus the expectation of steps $\mathbb {E}[Z_n]=0$; starting from $0$, the only possibility is to move rightwards, $\mathbb {E}[Z_0]=1+\mathbb {E}[Z_1]$. Similarly, starting from $0<i<n$ we have $$\mathbb {E}[Z_i]=\frac{1}{2} (\mathbb {E}[Z_{i+1}]+1)+\frac{1}{2} (\mathbb {E}[Z_{i-1}+1])$$
With these recurrence relations, we can obtain $\mathbb {E}[Z_0]=n^2$ (worst case for $0\leq i\leq n$), \textit{i.e.} it takes $n^2$ steps on average to start from 0 to reach $n$ and less than $n$ if $i>0$. By applying Markov's inequality $\mathbb {P} (X\geq a)\leq {\frac {\mathbb {E} (X)}{a}}$, with $a=n^2$, we have $\mathbb {P}[Z_0>2n^2]\leq \frac{1}{2}$, the possibility of taking more than $n^2$ steps to reach $n$ is less than a half. The proof of Markov's inequality is shown as below:
\begin{align*}
\mathbb {E}[X]&=\sum_{k=0}^{a}k\cdot \mathbb {P}[X=k]+\sum_{k=a+1}^{\infty}k\cdot \mathbb {P}[X=k]\\
&\geq 0+2n^2\mathbb {P}[X>a]
\Rightarrow \mathbb {P}[X>a]\leq \frac {\mathbb {E} (X)}{a}
\end{align*}

For one tentative of $2n$ steps, it has at least $\frac{1}{2}$ possibility to reach $n$, if we initiate $\log_2n$ tentatives, we have at least the possibility of $1- (1-\frac{1}{2})^{\log_2n}=1-\frac{1}{n}$ to have at least one tentative reaching the goal $n$. As long as we increase the number of tentatives, the possibility of success will get closer to $1$. 

\section{Algorithm}\label{appendix:B}
The construction of an SLCG is realized by iterative updates:
\begin{algorithm}[ht]
\begin{algorithmic}
\State Initialization: 
$Ls\gets \{\omega\}$, $\mathbf{LS}\gets\{\omega\}$, $\mathbf{Sol}\gets \varnothing$
\While{$Ls\neq \varnothing$}
	\For{$a_i\in Ls$}
		\State $Ls\gets Ls\backslash a_i$
		\If{$a_i\in init\_state$}
			\State $a_i{\rm .next}=sol_{a_i}$
            \State $sol_{a_i}{\rm .next}=\varnothing$
    	\Else
    		\For{$sol=\acm{A}{a_{\lnot i}}{a_i}\in \mathbf{T}$}
    			\State $a_i{\rm .next}\gets a_i{\rm .next}\cup sol$
    			\For{$b_j\in A$}
    				\State $sol{\rm .next}\gets b_j$
    			\EndFor
    			\State $Ls\gets Ls\cup b_j$
                \State $\mathbf{LS}\gets \mathbf{LS}\cup Ls$
    		\EndFor
    		\State$\mathbf{Sol}\gets \mathbf{Sol}\cup a_i{\rm .next}$           
    	\EndIf
	\EndFor
\EndWhile
\State\Return{$(\mathbf{LS},\mathbf{Sol})$}
\end{algorithmic}
\caption{SLCG construction}
\label{algorithm:2}
\end{algorithm}
\section{Transformation from general BNs to ABANs}\label{appendix:C}

Given Boolean functions $v_i (t+1)=f_i (\mathbf{V}_i)$, with $\mathbf{V}_i$ the set of participating variables among $v_1 (t),\cdots,v_n (t)$. Boolean operators are transformable to the composition of $\lnot,\land,\lor$ (\textit{e.g.} $a\ \mathbf{XOR} \ b = (a\land \lnot b)\lor (\lnot a\land b)$), and Boolean functions possess an equivalent CNF (clausal normal form) thanks to its distributivity. As ANs interpret transitions in the way of disjunctions of conjunctions, all BNs are transformable to ANs. We can see that it does not matter whether the dynamics is synchronous or asynchronous, because these transformations are only exerted on functions/transitions.

Example: 

Let $G_B=(V,F)$ a BN with $V=\{a,b,c,d,e\}$, and has only one Boolean function, $F=\{f (a)= (b\lor c)\land (d\lor e)\}$, we have 
$f (a)= (b\land d)\lor (b\land e)\lor (c\land d)\lor (c\land e)$, and $\lnot f (a)= (\lnot b\land \lnot c)\lor (\lnot d\land \lnot e)$. The equivalent AN is then constructed: 5 automata $\mathbf{\Sigma}=\{a,b,c,d,e\}$, with transitions: $\mathbf{T}=\{\acm{\{b_1,d_1\}}{a_0}{a_1},\ \acm{\{b_1,e_1\}}{a_0}{a_1},\ \acm{\{c_1,d_1\}}{a_0}{a_1},\ \acm{\{c_1,e_1\}}{a_0}{a_1},\ \acm{\{b_0,c_0\}}{a_1}{a_0},\ \acm{\{d_0,e_0\}}{a_1}{a_0}\}$.

% For one-column wide figures use
%\begin{figure}
% Use the relevant command to insert your figure file.
% For example, with the graphicx package use
%  \includegraphics{example.eps}
% figure caption is below the figure
%\caption{Please write your figure caption here}
%\label{fig:1}       % Give a unique label
%\end{figure}
%
% For two-column wide figures use
%\begin{figure*}
% Use the relevant command to insert your figure file.
% For example, with the graphicx package use
%  \includegraphics[width=0.75\textwidth]{example.eps}
% figure caption is below the figure
%\caption{Please write your figure caption here}
%\label{fig:2}       % Give a unique label
%\end{figure*}
%

%\begin{acknowledgements}
%If you'd like to thank anyone, place your comments here
%and remove the percent signs.
%\end{acknowledgements}

% BibTeX users please use one of
%\bibliographystyle{spbasic}      % basic style, author-year citations
\bibliographystyle{spmpsci}      % mathematics and physical sciences
\bibliography{bib}   % name your BibTeX data base

% Non-BibTeX users please use
%\begin{thebibliography}{}
%
% and use \bibitem to create references. Consult the Instructions
% for authors for reference list style.
%
%\bibitem{RefJ}
% Format for Journal Reference
%Author, Article title, Journal, Volume, page numbers (year)
% Format for books
%\bibitem{RefB}
%Author, Book title, page numbers. Publisher, place (year)
% etc
%\end{thebibliography}

\end{document}